\definecolor{red1}{HTML}{E3120B} 
\definecolor{red2}{HTML}{CC100A} 
\definecolor{red3}{HTML}{F6423C} 
\definecolor{red4}{HTML}{FEE7E7} 
\definecolor{backgroundred1}{HTML}{E1D0D0} 
\definecolor{backgroundred2}{HTML}{EBE0E0} 
\definecolor{backgroundred3}{HTML}{F5F0EF} 
\definecolor{blue1}{HTML}{141F52} 
\definecolor{blue2}{HTML}{1F2E7A} 
\definecolor{blue3}{HTML}{2E45B8} 
\definecolor{blue4}{HTML}{475ED1} 
\definecolor{blue5}{HTML}{D6DBF5} 
\definecolor{blue6}{HTML}{EBEDFA} 
\definecolor{backgroundblue1}{HTML}{D0D3E1} 
\definecolor{backgroundblue2}{HTML}{E0E2EB} 
\definecolor{backgroundblue3}{HTML}{EFF0F5} 
\definecolor{green1}{HTML}{4C9C16} 
\definecolor{green2}{HTML}{62C91D} 
\definecolor{green3}{HTML}{7BE236} 
\definecolor{green4}{HTML}{E2F9D2} 
\definecolor{green5}{HTML}{F0FCE9} 
\definecolor{yellow1}{HTML}{F9C31F} 
\definecolor{yellow2}{HTML}{FBD051} 
\definecolor{yellow3}{HTML}{FCDE83} 
\definecolor{yellow4}{HTML}{FEF2CD} 
\definecolor{yellow5}{HTML}{FEF8E6} 
\definecolor{orange1}{HTML}{F97A1F} 
\definecolor{orange2}{HTML}{FB9851} 
\definecolor{orange3}{HTML}{FCB583} 
\definecolor{orange4}{HTML}{FEE1CD} 
\definecolor{orange5}{HTML}{FEF0E6} 
\definecolor{gray1}{HTML}{0D0D0D} 
\definecolor{gray2}{HTML}{1A1A1A} 
\definecolor{gray3}{HTML}{333333} 
\definecolor{gray4}{HTML}{595959} 
\definecolor{gray5}{HTML}{666666} 
\definecolor{gray6}{HTML}{B3B3B3} 
\definecolor{gray7}{HTML}{D9D9D9} 
\definecolor{gray8}{HTML}{F2F2F2} 
\definecolor{gray9}{HTML}{FFFFFF} 
\colorlet{main_red}{red2}
\colorlet{main_blue}{blue2}
\colorlet{main_orange}{orange2}
\colorlet{main_gray}{gray5}
\colorlet{main_yellow}{yellow2}
\colorlet{main_black}{gray2}
\colorlet{parameters_flowchart}{main_red}
\colorlet{system_flowchart}{main_blue}
\colorlet{estimation_flowchart}{main_orange}
\colorlet{zeroth_flowchart}{main_gray}
\colorlet{parameters_background_flowchart}{main_red!10}
\colorlet{system_background_flowchart}{main_blue!10}
\colorlet{estimation_background_flowchart}{main_orange!10}
\colorlet{zeroth_background_flowchart}{main_gray!10}
\pgfplotsset{solid/.style={
    very thick,
}}
\pgfplotsset{dashed/.style={
    very thick,
    dash pattern=on 5pt off 2pt,
}}
\pgfplotsset{dashdot/.style={
    very thick,
    dash pattern=on 5.25pt off 1.75pt on 1.75pt off 1.75pt
}}
\pgfplotsset{marked/.style={
    mark=square,
    mark repeat={10},
    mark size = 1,
    mark options={rotate=45,solid},
    thick,
}}
\pgfplotsset{x_best/.style={dashed, marked, main_blue!40!gray4}}
\pgfplotsset{y_best/.style={dashed, marked, main_orange!40!gray4}}
\pgfplotsset{z_best/.style={dashed, marked, main_red!40!gray4}}
\pgfplotsset{phi_best/.style={dashed, marked, main_blue!40!gray4}}
\pgfplotsset{theta_best/.style={dashed, marked, main_orange!40!gray4}}
\pgfplotsset{psi_best/.style={dashed, marked, main_red!40!gray4}}
\pgfplotsset{x_ours/.style={solid, main_blue}}
\pgfplotsset{y_ours/.style={solid, main_orange}}
\pgfplotsset{z_ours/.style={solid, main_red}}
\pgfplotsset{phi_ours/.style={solid, main_blue}}
\pgfplotsset{theta_ours/.style={solid, main_orange}}
\pgfplotsset{psi_ours/.style={solid, main_red}}
\pgfplotsset{bl25/.style={solid, main_red}}
\pgfplotsset{bl50/.style={solid, main_orange}}
\pgfplotsset{bl75/.style={solid, main_yellow}}
\pgfplotsset{zo/.style={dashdot, main_blue}}
\pgfplotsset{model/.style={semithick, dashdot, main_gray, opacity = 0.35}}
\colorlet{grid_gray}{gray7}
\colorlet{connect_gray}{main_gray}
\newif\ifcompile
\title[Policy Optimization for Unknown Systems using Differentiable MPC]{Policy Optimization for Unknown Systems using Differentiable \\ Model Predictive Control}
\author{%
 \Name{Riccardo Zuliani} \Email{rzuliani@ethz.ch}\\
 \addr Automatic Control Laboratory, ETH Z\"urich
 \AND
 \Name{{Efe C.} Balta} \Email{efe.balta@inspire.ch}\\
 \addr Inspire AG, Z\"urich and Automatic Control Laboratory, ETH Z\"urich
 \AND
 \Name{John Lygeros} \Email{jlygeros@ethz.ch}\\
 \addr Automatic Control Laboratory, ETH Z\"urich
}
\begin{document}

\maketitle

\begingroup
\renewcommand{\thefootnote}{}
\footnotetext{\hspace*{0pt}This work was supported as a part of NCCR Automation, a National Centre of Competence in Research, funded by the Swiss National Science Foundation (grant number 51NF40\_225155).}
\endgroup

\begin{abstract}%
Model-based policy optimization often struggles with inaccurate system dynamics models, leading to suboptimal closed-loop performance. This challenge is especially evident in Model Predictive Control (MPC) policies, which rely on the model for real-time trajectory planning and optimization. We introduce a novel policy optimization framework for MPC-based policies combining differentiable optimization with zeroth-order optimization. Our method combines model-based and model-free gradient estimation approaches, achieving faster transient performance compared to fully data-driven approaches while maintaining convergence guarantees, even under model uncertainty. We demonstrate the effectiveness of the proposed approach on a nonlinear control task involving a 12-dimensional quadcopter model.
\end{abstract}

\begin{keywords}%
Model Predictive Control, Policy Optimization, Zeroth-order Optimization.%
\end{keywords}

\section{Introduction}\label{section:introduction}
Policy optimization is the problem of designing a control policy that minimizes a prescribed performance objective,
typically by searching over a parameterized policy class \citep{sutton2002reinforcement}. 
A growing line of work studies policy optimization problems where the policy is a model predictive controller (MPC) \citep{amos2018differentiable,gros2019data,agrawal2020learning,drgovna2022differentiable,zuliani2023bp,zuliani2024closed}. 
MPC-based policies generate predictions of future state trajectories using a system model and naturally incorporate constraints into their decision-making process, offering stronger safety guarantees and greater interpretability compared to model-free approaches.

Existing MPC-based policy optimization schemes typically assume that the dynamics model is exact, 
and convergence results rely on this assumption \citep{zuliani2023bp}; 
obtaining convergence guarantees when the model is inaccurate remains an open problem. 
In this paper, we propose a novel policy optimization algorithm with convergence guarantees inspired by \cite{he2024gray} blending model-based and zeroth-order gradient information that is robust to inexact models. 
A key feature of our method is that it can smoothly trade off between model-based and zeroth-order components, putting more weight on the model whenever it is trusted and relying more on model-free information otherwise.
To handle the nonsmoothness of MPC policies, we leverage the tools of \cite{bolte2021conservative}.
We validate our approach on a 12-dimensional nonlinear quadcopter.

\textbf{Related work:}
Zeroth-order optimization addresses the problem of minimizing an objective function when exact gradient information is unavailable.
The foundations of the approach used in this paper trace back to the seminal work of \cite{flaxman2004online}, which introduced a smoothing-based approximation technique enabling gradient-free optimization for possibly nonsmooth functions.
Subsequent research extended these ideas to convex settings: see \cite{duchi2012randomized} and \cite{nesterov2017random} for one-point gradient estimators, or \cite{shamir2017optimal} for a two-point estimator.
More recently, \cite{lin2022gradient} generalized the two-point approach to nonconvex problems, demonstrating its effectiveness beyond the convex regime.
Closest to our work is \cite{he2024gray}, which combines a one-point zeroth-order estimator with model-based gradient information to improve convergence speed, a direction we further build upon in this paper.

\textbf{Notation}: $\mathcal{X}^n$ denotes the $n$-fold Cartesian product of the set $\mathcal{X}$. Given a path-differentiable function $g$ of two arguments $x$ and $y$, $\J_{g,x}$ and $\J_{g,y}$ denote the projection of the conservative Jacobian $\J_g$ \rz{(defined in Section~\ref{section:preliminaries:conservative_jacobians})} onto the $x$ and $y$ entries. $\mathbb{B}$ and $\mathbb{S}$ are the unit ball and sphere in the Euclidean norm. $U(\mathbb{B})$ and $U(\mathbb{S})$ denote uniform distributions over $\mathbb{B}$ and $\mathbb{S}$. $\mathcal{N}_\mathcal{X}$ is the Clarke tangent cone of the set $\mathcal{X}$. $\operatorname*{dist}_p(a,B)$ is the distance between point $a$ and set $B$ in the $p$-norm. We use $\mathbf{0}_n, \mathbf{1}_n \in \R^n$ to denote the vector of zeros and ones, respectively.

\textbf{Outline:}
We graphically outline the rest of this paper graphically in Figure~\ref{fig:outline}.

\begin{figure}[hb!]
    \centering
    \ifcompile
      \hypersetup{hidelinks}
\begin{tikzpicture}[
    node distance=1.2cm,
    auto,
    font=\small,
    block/.style={rectangle, draw, thick, rounded corners, minimum width=1.5cm, minimum height=1.1cm, text centered, font=\small},
    start/.style={text centered, font=\small\bfseries, text width=1.8cm},
    arrow/.style={-Stealth, thick},
    rowlabel/.style={font=\small\bfseries, text width=1.5cm, align=center},
    sum/.style={circle, draw, thick, minimum size=0.5cm, text centered, font=\small},
    mult/.style={rectangle, draw, thick, minimum size=0.5cm, text centered, font=\small} 
]


\node (label1) [rowlabel] {Run 1};
\node (theta1) [start, right=0.1cm of label1, text=parameters_flowchart] {$\theta_k$};
\node (cl1) [block, right=0.75cm of theta1, draw=system_flowchart, text=system_flowchart, align=center] {Closed\\Loop};
\node (model) [block, right=3.2cm of cl1, draw=estimation_flowchart, text=estimation_flowchart, align=center] {Model\\Based};

\draw[arrow, parameters_flowchart] (theta1) -- (cl1);
\draw[arrow,system_flowchart] (cl1) -- (model) node[pos=0.7, above=0.1cm, name=data1, font=\small] {};

\node[system_flowchart,anchor=west,font=\small] (data1label) at ([yshift=0.1cm,xshift=-2.2cm]data1.center) {$x(\theta_k), u(\theta_k)$};


\coordinate (row2_y) at ($(cl1.south) + (0cm,-1.8cm)$);

\node (label2) [rowlabel, at={(label1 |- row2_y)}] {Run 2};
\node (theta2) [start, at={(theta1 |- row2_y)}, text=parameters_flowchart] {$\theta_k + \delta v_k$};

\node (cl2) [block, at={(cl1 |- row2_y)}, draw=system_flowchart, text=system_flowchart, align=center] {Closed\\Loop};
\node (zo) [block, at={(model |- row2_y)}, draw=zeroth_flowchart, text=zeroth_flowchart, align=center] {Zeroth\\Order};

\node (sum) [sum, at={(data1 |- row2_y)}, system_flowchart] {};
\node[text=system_flowchart] at (sum.center){$+$};

\node[below=1.4cm of data1label, xshift=-0.05cm, align=center, font=\small, system_flowchart] (sumlabel) {$x(\theta_k \!+\! \delta v_k)$ \\[-0.2em] $u(\theta_k \!+\! \delta v_k)$};

\draw[arrow, parameters_flowchart] (theta2) -- (cl2);
\draw[arrow, system_flowchart] (cl2) -- (sum);
\draw[arrow, system_flowchart] (sum) -- (zo);


\draw[arrow, system_flowchart] ([yshift=-0.1cm]data1.south) -- (sum.north) node[right, pos=0.8] {$-$};
\fill[system_flowchart] ([yshift=-0.1cm]data1.south) circle (2pt);


\node[parameters_flowchart] (mult1) [mult, right=0.8cm of model] {};
\node[parameters_flowchart] at (mult1.center) {$\times$};

\node[parameters_flowchart] (mult2) [mult, at={(mult1 |- zo)}] {};
\node[parameters_flowchart] at (mult2.center) {$\times$};

\node[above=0.1cm of mult1, font=\small, text=parameters_flowchart] {$\eta_k$};
\node[below=0.1cm of mult2, xshift=0.1cm, font=\small, text=parameters_flowchart] {$1-\eta_k$};

\node[parameters_flowchart] (sum_final) [sum, at={($(mult1)!0.5!(mult2)$)}] {};
\node[parameters_flowchart] at (sum_final.center){$+$};

\draw[arrow, estimation_flowchart] (model) -- (mult1);
\draw[arrow, zeroth_flowchart] (zo) -- (mult2);

\draw[arrow, parameters_flowchart] (mult1.south) -- (sum_final.north);
\draw[arrow, parameters_flowchart] (mult2.north) -- (sum_final.south);

\draw[arrow, parameters_flowchart] (sum_final.east) -- ++(0.7cm,0) node[right, font=\small, text=parameters_flowchart] (output) {$\J_{\mathcal{C}^\delta}(\theta_k)$};


\pgfdeclarelayer{background}
\pgfsetlayers{background,main}

\begin{pgfonlayer}{background}
    \coordinate (L_edge) at ($(label1.east)!0.5!(theta1.west)$);
    \coordinate (B1)     at ($(theta1.east)!0.5!(cl1.west)$);
    \coordinate (B2)     at ($(sum.east)!0.5!(model.west)$);
    \coordinate (B3)     at ($(model.east)!0.5!(mult1.west)$);
    \coordinate (R_edge) at ($(output.east) + (0.2cm, 0)$);

    \coordinate (Top_edge)    at ($(label1.north) + (0, 1.1cm)$);
    \coordinate (Bottom_edge) at ($(sumlabel.south) + (0, -1cm)$);
    \coordinate (Mid_Y)       at ($(sum_final.center) + (0, 0.2cm)$);

    \fill[parameters_background_flowchart, rounded corners=5mm] 
        (L_edge |- Bottom_edge) rectangle ([xshift=-0.1cm]B1 |- Top_edge);
        
    \fill[system_background_flowchart, rounded corners=5mm]   
        ([xshift=0.1cm]B1 |- Bottom_edge) rectangle ([xshift=-0.1cm]B2 |- Top_edge);
    
    \fill[estimation_background_flowchart, rounded corners=5mm] 
        ([xshift=0.1cm, yshift=0.1cm]B2 |- Mid_Y) rectangle ([xshift=-0.1cm]B3 |- Top_edge);
    \fill[zeroth_background_flowchart, rounded corners=5mm]     
        ([xshift=0.1cm]B2 |- Bottom_edge) rectangle ([xshift=-0.1cm, yshift=-0.1cm]B3 |- Mid_Y);
    
    \fill[parameters_background_flowchart, rounded corners=5mm] 
        ([xshift=0.1cm]B3 |- Bottom_edge) rectangle (R_edge |- Top_edge);
\end{pgfonlayer}

\coordinate (Mid1) at ($(L_edge)!0.5!(B1)$);
\coordinate (Mid2) at ($(B1)!0.5!(B2)$);
\coordinate (Mid3) at ($(B2)!0.5!(B3)$);
\coordinate (Mid4) at ($(B3)!0.5!(R_edge)$);

\node[text=parameters_flowchart, font=\small, align=center, anchor=north]
    at (Mid1 |- Top_edge) {\textbf{Sec. \ref{section:proposed}}};
\node[text=system_flowchart, font=\small, align=center, anchor=north]
    at (Mid2 |- Top_edge) {\textbf{Sec. \ref{section:formulation}}};
    
\node[text=estimation_flowchart, font=\small, align=center, anchor=north]
    at (Mid3 |- Top_edge) {\textbf{Sec. \ref{section:model_based}}};
    
\node[text=zeroth_flowchart, font=\small, align=center, anchor=south, yshift=1.65cm]
    at (Mid3 |- Bottom_edge) {\textbf{Sec. \ref{section:model_free}}};
    
\node[text=parameters_flowchart, font=\small, align=center, anchor=north]
    at (Mid4 |- Top_edge) {\textbf{Sec. \ref{section:proposed}}};

\end{tikzpicture}
    \else
      \includegraphics{Figures/flowchart/flowchart.pdf}
    \fi
    \caption{Graphical outline of the paper.}
    \label{fig:outline}
\end{figure}
\vspace{-1cm}

\section{Preliminaries}\label{section:preliminaries}
\rz{Since MPC-based policy optimization is a fundamentally nonsmooth problem, we adopt the framework of definable functions to ensure existence of well-defined generalized Jacobians. We formally define the generalized Jacobian in Section~\ref{section:preliminaries:definable_functions} and report the definition of definability in Appendix~\ref{section:preliminaries:conservative_jacobians}. The derivation of generalized Jacobians for MPC solution maps is reported in Appendix~\ref{section:preliminaries:differentiating_optimization_problems}. Finally, to account for both the nonsmooth landscape and unknown dynamics, we characterize the convergence of our algorithm using generalized critical points, defined in Section~\ref{section:preliminaries:critical_points}.}

\subsection{Conservative Jacobians}
\label{section:preliminaries:conservative_jacobians}
The notion of Conservative Jacobians,
introduced in \cite{bolte2021conservative},
extends the concept of derivatives to locally Lipschitz, 
almost everywhere differentiable functions.
\begin{definition}[\cite{bolte2021conservative}]
The map $\mathcal{J}_{\phi}:\R^n\rightrightarrows\R^{m \times n}$ is a conservative Jacobian of the locally Lipschitz function $\phi:\R^n\to\R^m$ if it is nonempty-valued, outer semicontinuous, locally bounded, and for all absolutely continuous curves $x:[0,1]\to\R^n$ and almost all $t\in[0,1]$
\begin{align}\label{eq:CJ_chain_rule}
\frac{\text{d}}{\text{d}t}\phi(x(t)) = \langle v,\dot{x}(t) \rangle ,~ \forall v\in \mathcal{J}_{\phi}(x(t)).
\end{align}
\end{definition}
Excluding a set of measure zero, $\J_\phi$ coincides almost everywhere with the gradient $\nabla \phi$.
Unlike other nonsmooth Jacobians, such as Clarke Jacobians, conservative Jacobians obey the chain rule of differentiation and a nonsmooth version of the implicit function theorem \citep{bolte2021nonsmooth},
making them particularly suitable for the sensitivity analysis of solution maps of optimization problems,
which are typically nonsmooth functions implicitly defined by optimality conditions.

A function admitting a conservative Jacobian is called \emph{path-differentiable}. As demonstrated in Lemma 3, \cite{bolte2021nonsmooth}, locally Lipschitz definable functions are always path-differentiable.

\subsection{Stationary points in nonsmooth optimization}
\label{section:preliminaries:critical_points}

The \emph{Clarke Jacobian} of a locally Lipschitz function 
$ f:\R^n \to \R $ 
is the outer-semicontinuous map
\begin{align*}
\partial_c f(x) := \operatorname*{co} \{ g \in \R^n : g = \lim_{y \to x} \nabla f(y),~ y \in D_f \},
\end{align*}
where $ D_f \subset \R^n $ is the full-measure set on which $ f $ is differentiable, 
and $ \operatorname*{co} $ denotes the convex hull. 
There is a close connection between Clarke and conservative Jacobians \citep{bolte2021conservative}, where $\partial_c f(x) \subseteq \operatorname*{co}\J_f(x)$ for all $x$, and $\partial_c f(x) = \J_f(x)$ for almost every $x$.

If $x$ is a local minimizer of $f$, then $0 \in \partial_c f(x)$ 
(and similarly $0 \in \operatorname*{co}\J_f(x)$).
Hence, in nonsmooth optimization, one typically searches for 
\emph{Clarke stationary points}, that is, points $x$ with $0 \in \partial_c f(x)$.
In our setting, identifying Clarke stationary points is impossible to due lack of full information regarding the objective function.  
Consequently, we adopt a weaker notion of stationarity, namely, that of a 
\emph{Goldstein $\delta$-critical point}, defined as any $x$ such that 
$0 \in \partial_\delta f(x)$, where
\begin{align*}
\partial_\delta f(x) := \operatorname*{co} \left\{ {\textstyle \bigcup_{y \in \delta\mathbb{B}}} \partial_c f(x+y) \right\}
\end{align*}
is the Goldstein $\delta$-subdifferential of $f$.
As shown in \cite{zhang2020complexity}, 
$\lim_{\delta \downarrow 0} \partial_\delta f(x) = \partial_c f(x)$,
making Goldstein $\delta$-stationarity a meaningful optimality condition 
for nonsmooth problems.

For constrained problems, such as minimizing $f(x)$ subject to $h(x) = 0$, $x\in \mathcal{X}$, 
we consider a generalized stationarity concept adapted from \cite{grimmer2025goldstein}.
A point $x$ is said to satisfy the \emph{Goldstein Fritz-John $\delta$-critical condition} 
if there exist multipliers $\lambda_0, \lambda_1 \ge 0$ such that $0 \in \lambda_0 \partial_\delta f(x) + \lambda_1 \partial_\delta h(x) + \mathcal{N}_\mathcal{X}(x)$ and $\lambda_0 + \lambda_1 = 1$.
This condition parallels the classical Fritz-John optimality conditions,
with the gradients replaced by Goldstein $\delta$-subdifferentials.
If $\lambda_0 > 0$, the point is called a \emph{Goldstein KKT $\delta$-critical point}.

\section{Problem Formulation}\label{section:formulation}
Consider an unknown system evolving over $t\in\Z_{[0,T]}$ from an initial condition $x_0\in\R^{n_x}$
\begin{subequations}
\begin{align}
x_{t+1} &= f(x_t,u_t), \label{eq:problem_formulation:system_true} \\
u_t &= \operatorname*{MPC}(x_t,\theta), \label{eq:problem_formulation:mpc}
\end{align}
\end{subequations}
where $\theta\in\Theta$ is a tunable parameter that determines the behavior of the policy and $\Theta \subset \R^{n_\theta}$ is a parameter set.
Despite not knowing the dynamics in \eqref{eq:problem_formulation:system_true} exactly, 
we assume existence of a model $g:\R^{n_x} \times \R^{n_u} \to \R^{n_x}$ such that 
$g(x,u) \approx f(x,u)$ for all $x$ and $u$.
\rz{For example, $g$ may be a linear model obtained via system identification of the nonlinear plant $f$,
or a first-principles model that neglects higher-order effects.}
We consider the constraints
\begin{align}
x_t\in \mathcal{X},~~ u_t \in \mathcal{U}, \label{eq:problem_formulation:constraints}
\end{align}
where $\mathcal{X} \subseteq \R^{n_x}$ and $\mathcal{U} \subseteq \R^{n_u}$ are known convex sets.
The input $u_t=\operatorname*{MPC}(x_t,\theta)$ depends on the state $x_t$ (last constraint in \eqref{eq:problem_formulation:mpc}) and the parameters $\theta$, and it is computed by solving
\begin{align}
\begin{split}
\operatorname*{minimize}_{x_{\cdot|t},u_{\cdot|t}} & \quad \ell_{N,\theta}(x_{N|t}) + \sum_{k=0}^{N-1} \ell_\theta(x_{k|t},u_{k|t})\\
\text{subject to} & \quad x_{k+1|t} = g(x_{k|t},u_{k|t}),~~ x_{0|t} = x_t,~~ k\in\Z_{[0,N-1]},\\
& \quad x_{k|t} \in \mathcal{X},~~ k\in\Z_{[0,N]}, \\
& \quad u_{k|t} \in \mathcal{U},~~ k\in\Z_{[0,N-1]},
\end{split} \label{eq:problem_formulation:mpc_problem}
\end{align}
where $\ell_\theta$ and $\ell_{N,\theta}$ are parameterized cost functions, and setting $u_t = u_{0|t}$. 
Throughout, we assume that \eqref{eq:problem_formulation:mpc_problem} is a quadratic program (QP) meeting the conditions of Theorem~\ref{theorem:diff} in Appendix\,\ref{section:preliminaries:differentiating_optimization_problems}. 
If $g$ is a nonlinear function, one can use the linearization techniques in Section VI-A of \cite{zuliani2023bp}, to obtain an MPC that can be expressed as a quadratic program. 

Our emphasis is on QP-based MPC policies, which often deliver strong performance even on nonlinear control problems. 
Nonetheless, the framework can easily be extended to fully nonlinear policies using the differentiation methods in \cite{zuliani2025differentiable}. 
Similarly, this method can accommodate nonconvex upper-level constraints in \eqref{eq:problem_formulation:constraints} as long as the MPC satisfies Assumption~\ref{assumption:algorithm:definability_dynamics_and_cost}. 
Note additionally that while in this work we restrict $\theta$ to parameters appearing in the cost of \eqref{eq:problem_formulation:mpc}, extending it to also affect the constraints or dynamics of the MPC is straightforward.

Our goal is to obtain an MPC design $\theta^*$ that minimizes a known objective function $C(x,u)$, 
where $x=(x_0,\dots,x_T)$ and $u=(u_0,\dots,u_{T-1})$, typically with $T\gg N$, are the closed-loop trajectories obtained by combining \eqref{eq:problem_formulation:system_true} and \eqref{eq:problem_formulation:mpc}, 
while satisfying \eqref{eq:problem_formulation:constraints} for all $t$.
\begin{align}
\begin{split}
\operatorname*{minimize}_\theta & \quad C(x,u) = C(x_0,\dots,x_T,u_0,\dots,u_{T-1})\\
\text{subject to} & \quad x_{t+1} = f(x_t,u_t),~~x_0 \text{ given}, ~~ t\in\Z_{[0,T-1]}, \\
& \quad u_t = \operatorname*{MPC}(x_t,\theta),~~ t\in\Z_{[0,T-1]}, \\
& \quad x_t \in \mathcal{X},~~ t\in\Z_{[0,T]},\\
& \quad u_t \in \mathcal{U},~~ t\in\Z_{[0,T-1]}.
\end{split} \label{eq:problem_formulation:policy_optimization_problem}
\end{align}
Since $f$ is unknown, problem \eqref{eq:problem_formulation:policy_optimization_problem} cannot be solved directly. 
\rz{Our approach combines model-based gradient estimation, leveraging the available model $g$, with zeroth-order techniques. }

\section{Model-based Gradient Estimation}\label{section:model_based}
\subsection{Solving the problem when dynamics are known}

Even with perfect model knowledge, 
\eqref{eq:problem_formulation:policy_optimization_problem} is difficult to solve due to the nonsmooth constraints imposed by the MPC function.
\cite{zuliani2023bp} introduced a gradient-based method specifically designed for such problems, which solves an unconstrained reformulation of \eqref{eq:problem_formulation:policy_optimization_problem}.
To cast \eqref{eq:problem_formulation:policy_optimization_problem} as an unconstrained problem, 
let $x:\Theta\to \R^{(T+1)n_x}$ and $u:\Theta\to\R^{Tn_u}$ be the closed-loop trajectories obtained by combining \eqref{eq:problem_formulation:mpc} and \eqref{eq:problem_formulation:system_true} over the entire horizon $t\in\Z_{[0,T]}$.
Then \eqref{eq:problem_formulation:policy_optimization_problem} becomes
\begin{align}
\begin{split}
\operatorname*{minimize}_{\theta\in\Theta} & \quad C(x(\theta),u(\theta)) \\
\text{subject to} & \quad x(\theta) \in \mathcal{X}^{T+1},~~ u(\theta)\in \mathcal{U}^{T}.
\end{split} \label{eq:model_based_estimation:unconstrained_policy_optimization_problem_temp}
\end{align}
Since \eqref{eq:problem_formulation:mpc_problem} enforces $u_t(\theta)\in \mathcal{U}$ for all $t\in\Z_{[0,T-1]}$, 
the input constraints in \eqref{eq:model_based_estimation:unconstrained_policy_optimization_problem_temp} can be dropped.
The state constraints can be incorporated in the cost through a continuous penalty function $P(\theta)=\bar{P}(x(\theta))$
satisfying $P(\theta) = 0$ whenever $x(\theta)\in \mathcal{X}^{T+1}$ and $P(\theta) > 0$ otherwise
\begin{align}
\operatorname*{minimize}_{\theta\in\Theta} ~ C(x(\theta),u(\theta)) + P(\theta) =: \CC(\theta). \label{eq:model_based_estimation:unconstrained_policy_optimization_problem}
\end{align}
Under appropriate assumptions, which we detail in Section~\ref{section:convergence}, $\CC$ is path-differentiable and the update law $\theta_{k+1} = \Pi_\Theta [\theta_k - \alpha_k d_k]$, 
where $d_k \in \JC(\theta_k)$, and $\Pi_\Theta:\R^{n_\theta}\to\Theta$ is the projector onto the set $\Theta$,
converges to a minimizer of \eqref{eq:model_based_estimation:unconstrained_policy_optimization_problem_temp}.

\subsection{Imperfect Gradient Information using an Inexact Model}\label{section:model_based_imperfect}

Since \eqref{eq:problem_formulation:system_true} is unknown, the exact Jacobian $\JC$ cannot be computed,
and the exact update cannot be applied directly.
Instead, we approximate $\JC$ using the available model $g$ of the true dynamics $f$.
First, we recursively build approximations $\jj_x(\theta)$ and $\jj_u(\theta)$ of $\J_x(\theta)$ and $\J_u(\theta)$ via
\begin{subequations}
\label{eq:model_based_estimation:backprop_approximate_1}
\begin{align}
\jj_{x_{t+1}}(\theta) & = \jj_{g,x}(x_t,u_t) \jj_{x_{t}}(\theta) + \jj_{g,u}(x_t,u_t) \jj_{u_t}(\theta),\\
\jj_{u_t}(\theta) & = \jj_{\operatorname*{MPC},x} (x_t,\theta) \jj_{x_t}(\theta) + \jj_{\operatorname*{MPC},\theta}(x_t,\theta),
\end{align}
where $x_t=x_t(\theta)$, $u_t=u_t(\theta)$, and $J_g(x,u)$ and $J_\text{MPC}(x,\theta)$ are elements of the conservative Jacobians $\J_g(x,u)$ and $\J_\text{MPC}(x,\theta)$, respectively.
We then combine $J_x(\theta)$ and $J_u(\theta)$ to obtain an estimate $\jc(\theta)$ of $\JC(\theta)$ using the chain rule
\begin{align}
\jc(\theta) = \jj_{C,x}(x,u) \jj_x(\theta) + \jj_{C,u}(x,u) \jj_u(\theta), ~~ \jj_{C}(x,u) \in \J_{C}(x,u). \label{eq:model_based_estimation:backprop_approximate_2}
\end{align}
\end{subequations}
Replacing $\JC(\theta)$ with $\jc(\theta)$ yields the implementable update $\theta_{k+1} = \Pi_\Theta [\theta_k - \alpha_k \jc(\theta_k)]$.

\section{Model-free Gradient Estimation}\label{section:model_free}
The update of Section~\ref{section:model_based_imperfect} is appealing because it only relies on the approximate model. 
However, without further assumptions on the model accuracy, it is impossible to derive convergence guarantees. 
To address this issue, we construct gradient-like directions purely from data, 
for which convergence can be established under mild conditions. 

For any $\delta>0$, consider the \emph{randomized smoothing approximation} $\CCd(\theta):\Theta_\delta\to \R$ of $\CC$
\begin{align*}
\CCd(\theta) = \mathbb{E}_{w \sim U(\mathbb{B})} [\CC(\theta+\delta w)],
\end{align*}
where $\Theta_\delta := \Theta + \delta \mathbb{B}$.
$\CCd$ is a smooth approximation of $\CC$: as long as $\CC$ is Lipschitz with constant $L_{\CC}$,
$\CCd$ is continuously differentiable and Lipschitz with constant $L_{\CCd}=c\sqrt{n_\theta}\delta^{-1}L_{\CC}$ for some $c>0$ \cite[Proposition 2.3]{lin2022gradient}.
Working with $\CCd$ instead of $\CC$ has two key benefits:
(i) the Lipschitz continuity of $\CC$ implies smoothness of $\CCd$, and (ii) one can estimate $\nabla\CCd$ solely using function evaluations.
For this, we use the following one-point estimator
\begin{align}\label{eq:model_free_estimation:zeroth_order}
 \jct(\theta_k,v_k) = \frac{n_\theta}{\delta} [\CC(\theta_k+\delta v_k)-\CC(\theta_{k})] v_k,
\end{align}
where $v_k$ is sampled i.i.d.\ from $U(\mathbb{S})$ for each $k\in\N$.
We have the following.
\begin{lemma}\label{lemma:model_free_estimation:gradient_sampling}
For any $\theta\in\Theta$, $\mathbb{E}_v[\jct(\theta,v)]=\nabla \CCd(\theta)$.
\end{lemma}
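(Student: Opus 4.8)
The plan is to rely on the classical divergence-theorem identity that underpins one-point zeroth-order estimators. First I would split the estimator into two pieces,
\[
\mathbb{E}_v[\jct(\theta,v)] = \frac{n_\theta}{\delta}\,\mathbb{E}_{v\sim U(\mathbb{S})}\!\big[\CC(\theta+\delta v)\,v\big] - \frac{n_\theta}{\delta}\,\CC(\theta)\,\mathbb{E}_{v\sim U(\mathbb{S})}[v],
\]
which is legitimate since both expectations are finite ($\CC$ is Lipschitz, hence bounded, on $\Theta_\delta$ under the standing assumptions of Section~\ref{section:convergence}, and $\theta+\delta v\in\Theta_\delta$ for $\theta\in\Theta$, $v\in\mathbb{S}$). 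The second term vanishes: $U(\mathbb{S})$ is invariant under the reflection $v\mapsto -v$, hence $\mathbb{E}_v[v] = -\mathbb{E}_v[v] = 0$. It therefore suffices to show that the first term equals $\nabla\CCd(\theta)$.

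For that, I would write the ball expectation as a normalized Lebesgue integral, $\CCd(\theta) = \mathrm{vol}(\delta\mathbb{B})^{-1}\int_{\delta\mathbb{B}}\CC(\theta+z)\,\mathrm{d}z$. Because $\CC$ is Lipschitz on $\Theta_\delta$, the difference quotients of $z\mapsto\CC(\theta+z)$ with respect to $\theta$ are uniformly bounded by $L_\CC$ and, by Rademacher's theorem, converge for a.e.\ $z$; dominated convergence then yields $\nabla\CCd(\theta) = \mathrm{vol}(\delta\mathbb{B})^{-1}\int_{\delta\mathbb{B}}\nabla\CC(\theta+z)\,\mathrm{d}z$, where $\nabla\CC$ is the a.e.-defined classical gradient (which agrees a.e.\ with any measurable selection of $\JC$, and whose translates depend continuously on $\theta$ in $L^1$, so $\CCd\in C^1$ — in line with \cite[Proposition 2.3]{lin2022gradient}). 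Applying the Gauss--Green theorem componentwise — valid since $\CC$ is locally Lipschitz, hence in $W^{1,1}_{\mathrm{loc}}$, and $\delta\mathbb{B}$ has Lipschitz boundary with outward unit normal $z/\delta$ at $z\in\delta\mathbb{S}$ — converts the volume integral into $\mathrm{vol}(\delta\mathbb{B})^{-1}\int_{\delta\mathbb{S}}\CC(\theta+z)\,(z/\delta)\,\mathrm{d}S(z)$. Substituting $z=\delta v$ with $v\in\mathbb{S}$ and using $\mathrm{area}(\mathbb{S}) = n_\theta\,\mathrm{vol}(\mathbb{B})$ gives $\nabla\CCd(\theta) = \tfrac{n_\theta}{\delta}\,\mathbb{E}_{v\sim U(\mathbb{S})}[\CC(\theta+\delta v)\,v]$, which is exactly the first term above. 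Combining the two steps proves the lemma.

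I expect the only delicate point to be the second step: justifying both the interchange of gradient and integral and the divergence theorem for the merely path-differentiable (locally Lipschitz) map $\CC$, rather than a $C^1$ function. Lipschitz continuity on $\Theta_\delta$ together with dominated convergence handles the former, and the Gauss--Green theorem for Sobolev vector fields handles the latter; if one prefers to avoid reproving $\CCd\in C^1$, one can simply cite \cite[Proposition 2.3]{lin2022gradient} for the smoothness of $\CCd$, after which only the integral identity remains to be established.
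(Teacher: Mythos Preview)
Your proof is correct and follows the same approach as the paper: split off the zero-mean term and then establish the identity $\mathbb{E}_{v\sim U(\mathbb{S})}[\CC(\theta+\delta v)\,v]=\tfrac{\delta}{n_\theta}\nabla\CCd(\theta)$. The only difference is that the paper cites this identity directly from \cite[Lemma~1]{flaxman2004online}, whereas you unpack its standard divergence-theorem proof, taking extra care with the Lipschitz (rather than $C^1$) regularity of $\CC$.
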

\begin{proof}
By Lemma 1 in \cite{flaxman2004online}, it holds that $\mathbb{E}_v[\mathcal{C}(\theta+\delta v)v] = \delta/n_\theta \nabla \mathcal{C}^\delta(\theta)$. Since $v$ is zero-mean, we immediately have $\mathbb{E}_v[(\mathcal{C}(\theta+\delta v)-\mathcal{C}(\theta))v]=\mathbb{E}_v[\mathcal{C}(\theta+\delta v)v] = \delta/n_\theta \nabla \mathcal{C}^\delta(\theta)$.
\end{proof}
Lemma~\ref{lemma:model_free_estimation:gradient_sampling} shows that the zeroth-order estimator \eqref{eq:model_free_estimation:zeroth_order} provides an unbiased estimate of $\nabla \CCd$. 
To relate this to the original nonsmooth objective, we recall the following result from \cite{lin2022gradient}.
\begin{lemma}\label{lemma:model_free_estimation:delta_subgradient}
Suppose $\CC$ is $L_{\CC}$-Lipschitz in $\Theta$. Then $\nabla \CCd(\theta) \in \partial_\delta \CC(\theta)$ for any $\theta\in\Theta$.
\end{lemma}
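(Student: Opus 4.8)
The plan is to exhibit $\nabla\CCd(\theta)$ as an average over the ball $\theta+\delta\mathbb{B}$ of gradients of $\CC$, and then to use that the Goldstein $\delta$-subdifferential $\partial_\delta\CC(\theta)$ is a convex compact set which, by construction, contains every such gradient.

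First I would record the representation
\begin{align*}
\nabla\CCd(\theta) = \mathbb{E}_{w\sim U(\mathbb{B})}\big[\nabla\CC(\theta+\delta w)\big],
\end{align*}
where $\nabla\CC$ is the gradient of $\CC$, defined Lebesgue-almost everywhere by Rademacher's theorem since $\CC$ is $L_{\CC}$-Lipschitz on $\Theta_\delta$. Writing $\CCd(\theta)=|\mathbb{B}|^{-1}\int_{\mathbb{B}}\CC(\theta+\delta w)\,dw$, the $i$-th finite-difference quotient $h^{-1}(\CC(\theta+\delta w+he_i)-\CC(\theta+\delta w))$ is bounded in modulus by $L_{\CC}$ uniformly in $w$ and $h$, and for almost every $w\in\mathbb{B}$ (namely those $w$ for which $\theta+\delta w$ avoids the Rademacher null set) it converges to $\partial_i\CC(\theta+\delta w)$ as $h\to0$. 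Dominated convergence then gives the displayed formula componentwise; in particular this re-derives the differentiability of $\CCd$, which we may also take as already recorded in \cite[Proposition 2.3]{lin2022gradient}.

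Next I would prove the membership. For every $y$ at which $\CC$ is differentiable one has $\nabla\CC(y)\in\partial_c\CC(y)$, and if in addition $y\in\theta+\delta\mathbb{B}$ then $\partial_c\CC(y)\subseteq\partial_\delta\CC(\theta)=\operatorname*{co}\big\{\bigcup_{z\in\delta\mathbb{B}}\partial_c\CC(\theta+z)\big\}$ by definition. The set $\partial_\delta\CC(\theta)$ is nonempty and convex by construction, bounded (all Clarke subgradients of an $L_{\CC}$-Lipschitz function lie in $L_{\CC}\mathbb{B}$), and closed (outer semicontinuity of $\partial_c$ together with compactness of $\delta\mathbb{B}$), hence convex and compact. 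Finally, the random vector $Z:=\nabla\CC(\theta+\delta w)$ with $w\sim U(\mathbb{B})$ is bounded and takes values almost surely in the convex compact set $K:=\partial_\delta\CC(\theta)$; testing against an arbitrary direction $d$ via the support function $\sigma_K(d)=\sup_{g\in K}\langle g,d\rangle$ gives $\langle\nabla\CCd(\theta),d\rangle=\mathbb{E}[\langle Z,d\rangle]\le\sigma_K(d)$, and since $K$ is closed and convex the validity of this inequality for every $d$ forces $\nabla\CCd(\theta)=\mathbb{E}[Z]\in K$.

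I expect the only delicate point to be the measure-theoretic bookkeeping in the first step: one must ensure that the Lebesgue-null set on which $\CC$ fails to be differentiable does not obstruct differentiation under the integral sign, and that the almost-everywhere-defined map $w\mapsto\nabla\CC(\theta+\delta w)$ is measurable (it is, being an a.e.\ pointwise limit of continuous difference quotients). The convexity/compactness argument and the support-function step are routine. Alternatively, since this statement is precisely the randomized-smoothing fact underlying \cite{lin2022gradient}, one may simply invoke the corresponding result there.
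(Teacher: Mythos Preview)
Your proof is correct. The paper does not actually prove this lemma: it is stated as a recalled result from \cite{lin2022gradient} and carries no proof at all. Your approach---representing $\nabla\CCd(\theta)$ as the average of $\nabla\CC$ over $\theta+\delta\mathbb{B}$ via dominated convergence, then using the convex-compactness of $\partial_\delta\CC(\theta)$ and a support-function argument to conclude that the average stays in the set---is the standard self-contained derivation (and is essentially what underlies the cited result in \cite{lin2022gradient}). You even note the alternative of simply invoking that reference, which is precisely what the paper does. So there is no discrepancy to flag; you have supplied strictly more than the paper.
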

This means that estimating $\nabla\CCd$ yields an element of the Goldstein $\delta$-subdifferential of the true objective $\CC$,
thus providing a second implementable update law $\theta_{k+1} = \Pi_\Theta [\theta_k - \alpha_k \jct(\theta_k,v_k)]$.

\section{Proposed Algorithm}\label{section:proposed}
Our algorithm combines the model-based update of Section~\ref{section:model_based_imperfect} with the model-free update of Section~\ref{section:model_free}, following the \emph{gray-box} scheme of \cite{he2024gray}.
At iteration $k$, we update $\theta^k$ as follows
\begin{align}
\begin{split}
d_{k,1} & = \jc(\theta_k),\\
d_{k,2} & = \jct(\theta_k,v_k),~ v_k \sim U(\mathbb{S}),\\
d_k & = \eta_k d_{k,1} + (1-\eta_k) d_{k,2},\\
\theta_{k+1} & = \Pi_\Theta[\theta_k - \alpha_k d_k],
\end{split} \label{eq:algorithm:main_update}
\end{align}
where $\{ \alpha_k \}_{k\in\N} \subset \R_{>0}$ is a sequence of vanishing stepsizes,
and $\{ \eta_k \}_{k\in\N} \subset [0,1]$ weights the relative contribution of the two update directions.
Choosing $\eta_k \approx 1$ prioritizes the model-based direction $\jc(\theta_k)$, 
whereas $\eta_k \approx 0$ makes the update closer to zeroth order via $\jct(\theta_k,v_k)$.
Generally, selecting $\eta_k$ is a design choice that should reflect the trustworthiness of the model. We showcase how the converge speed is affected by different choices of $\eta_k$ in simulation in Section\,\ref{section:simulation}.

\subsection{Convergence to a Goldstein \texorpdfstring{$\delta$}{delta}-Critical Point}\label{section:convergence}

Our convergence analysis builds on Lemma~\ref{lemma:model_free_estimation:gradient_sampling} and standard results on stochastic projected gradient methods for nonsmooth, nonconvex objectives \citep{davis2020stochastic}.
We first impose regularity assumptions ensuring that the closed-loop map and the objective are locally Lipschitz and definable.
\begin{assumption}\label{assumption:algorithm:definability_dynamics_and_cost}
The true dynamics $f$, the model $g$, the cost function $C$, the penalty $P$, and the MPC function $\operatorname*{MPC}:\R^{n_x}\times\Theta\to\R^{n_u}$ are locally Lipschitz and definable in an o-minimal structure.
\end{assumption}
Under Assumption~\ref{assumption:algorithm:definability_dynamics_and_cost}, the closed-loop trajectories $x(\theta)$ and $u(\theta)$ are locally Lipschitz and definable in $\theta$,
allowing the definition of conservative Jacobians.
Assumption~\ref{assumption:algorithm:definability_dynamics_and_cost} is not restrictive, as definable functions cover almost all functions of interest in control and optimization,
and the MPC function satisfies Assumption~\ref{assumption:algorithm:definability_dynamics_and_cost} under the mild conditions laid out in Appendix~\ref{section:preliminaries:differentiating_optimization_problems}.
To ensure feasibility of the MPC, one can resort to the technique in Section VI-D of \cite{zuliani2023bp}.

Our last technical requirement simplifies the analysis by ensuring boundedness of the gradients.
\begin{assumption}\label{assumption:algorithm:parameter_set_is_compact_and_definable}
The set $\Theta$ is convex, compact and definable in an o-minimal structure.
\end{assumption}
To ensure convergence, we require the following conditions on $\alpha_k$ and $\eta_k$
\begin{subequations}
\label{eq:algorithm:stepsize_condition}\begin{align}
& \alpha_k > 0, ~~ && {\textstyle\sum_{k\in\N}} \alpha_k = + \infty,~~ {\textstyle\sum_{k\in\N}} \alpha_k^2 < + \infty, \label{eq:algorithm:stepsize_condition:alpha}\\
& \eta_k \in [0,1],~~ && {\textstyle\sum_{k\in\N}} \eta_k \alpha_k < + \infty. \label{eq:algorithm:stepsize_condition:eta}
\end{align}
\end{subequations}
These conditions hold, for example, if $\alpha_k = 1/(k+1)^\gamma$ and $\eta_k = 1/(k+1)^{\beta-\gamma}$, with $\gamma\in(0.5,1]$ and $\beta > 1$.
This allows for a wide range of possible stepsizes and,
crucially, for different decrease rates for $\eta_k$.
This last feature, in particular,
allows us to accomodate situations where the model $\jc$ is deemed trustworthy,
and thus $\eta_k$ should have larger values,
but also situations where $\jc$ is less trusted and the zeroth-order estimation is preferred.
Observe that $\eta_k \downarrow 0$, meaning that eventually the information obtained using the model is discarded and the algorithm relies solely on data.

\begin{theorem}\label{theorem:algorithm:main_convergence_result}
Under Assumptions~\ref{assumption:algorithm:definability_dynamics_and_cost}, and \ref{assumption:algorithm:parameter_set_is_compact_and_definable}, if $\alpha_k$ and $\eta_k$ satisfy \eqref{eq:algorithm:stepsize_condition}, then $\theta_k$ as obtained through \eqref{eq:algorithm:main_update} converges to a Goldstein Fritz-John $\delta$-critical point of the problem
\begin{align}
\operatorname*{minimize}_{\theta\in \Theta} \quad \mathcal{C}(\theta) \quad \text{subject to}~~ P(\theta)=0. \label{eq:algorithm:problem_penalty}
\end{align}
\end{theorem}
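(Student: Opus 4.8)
The plan is to recast the update \eqref{eq:algorithm:main_update} as a perturbed stochastic projected subgradient method for the randomized-smoothing surrogate $\CCd$ subject to $P(\theta)=0$, and then invoke the convergence machinery for nonsmooth, nonconvex stochastic projected gradient methods (e.g.\ \cite{davis2020stochastic}) together with the Goldstein-subdifferential interpretation from Lemmas~\ref{lemma:model_free_estimation:gradient_sampling}--\ref{lemma:model_free_estimation:delta_subgradient}. First I would verify the standing regularity hypotheses: by Assumption~\ref{assumption:algorithm:definability_dynamics_and_cost} and stability of definable functions under composition, the closed-loop maps $x(\theta),u(\theta)$ are locally Lipschitz and definable, hence so is $\CC(\theta)=C(x(\theta),u(\theta))+P(\theta)$; by Assumption~\ref{assumption:algorithm:parameter_set_is_compact_and_definable}, $\Theta$ is compact, so $\CC$ is globally $L_{\CC}$-Lipschitz on $\Theta_\delta$ and all conservative-Jacobian elements (hence the model-based direction $\jc(\theta_k)$) are uniformly bounded, say by $M$. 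Likewise the zeroth-order estimator \eqref{eq:model_free_estimation:zeroth_order} is bounded by $n_\theta L_{\CC}$ in norm. This gives a uniform bound $\|d_k\|\le M'$ on the combined direction and square-summability of $\alpha_k d_k$.

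The core step is the decomposition $d_k = \jct(\theta_k,v_k) + \eta_k\big(\jc(\theta_k)-\jct(\theta_k,v_k)\big)$. Taking conditional expectations, Lemma~\ref{lemma:model_free_estimation:gradient_sampling} gives $\mathbb{E}[\jct(\theta_k,v_k)\mid\mathcal{F}_k] = \nabla\CCd(\theta_k)$, so the update reads
\begin{align*}
\theta_{k+1} = \Pi_\Theta\!\big[\theta_k - \alpha_k\big(\nabla\CCd(\theta_k) + \xi_k + \varepsilon_k\big)\big],
\end{align*}
where $\xi_k := \jct(\theta_k,v_k)-\nabla\CCd(\theta_k)$ is a bounded martingale-difference noise and $\varepsilon_k := \eta_k(\jc(\theta_k)-\jct(\theta_k,v_k))$ satisfies $\|\varepsilon_k\|\le \eta_k M'$, hence $\sum_k \alpha_k\|\varepsilon_k\| \le M'\sum_k \alpha_k\eta_k < \infty$ by \eqref{eq:algorithm:stepsize_condition:eta}. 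Thus the model-based term is an \emph{absolutely summable perturbation} of an exact stochastic projected gradient step on the $C^1$ function $\CCd$. Applying the ODE/Lyapunov-type convergence theorem for such methods under \eqref{eq:algorithm:stepsize_condition:alpha} — using the definability of $\CCd$ to rule out oscillation between nonstationary limit points (a Sard-type argument via the projection formula) — yields that every limit point $\theta^\star$ of $\theta_k$ is stationary for $\min_{\theta\in\Theta}\CCd(\theta)$, i.e.\ $0\in\nabla\CCd(\theta^\star)+\mathcal{N}_\Theta(\theta^\star)$. Finally, I would translate this surrogate-stationarity back to the target problem \eqref{eq:algorithm:problem_penalty}: write $\CCd(\theta)=\bar C^\delta(\theta)+P^\delta(\theta)$ componentwise, invoke Lemma~\ref{lemma:model_free_estimation:delta_subgradient} to get $\nabla\bar C^\delta(\theta^\star)\in\partial_\delta\bar C(\theta^\star)$ and $\nabla P^\delta(\theta^\star)\in\partial_\delta P(\theta^\star)$, and observe that at a minimizer the penalty forces $P(\theta^\star)=0$, so the inclusion $0\in\partial_\delta C(\theta^\star)+\partial_\delta P(\theta^\star)+\mathcal{N}_\Theta(\theta^\star)$ is exactly the Goldstein Fritz-John $\delta$-critical condition (normalizing the trivial multipliers $\lambda_0=\lambda_1=\tfrac12$, or keeping a single multiplier on $P$ as in \cite{grimmer2025goldstein}).

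The main obstacle I anticipate is the final translation step and the precise sense of "converges": the stochastic approximation theorem delivers convergence of the iterates to the \emph{set} of stationary points of the surrogate (or convergence along a subsequence), and one must argue that this set, under definability, is such that the whole sequence converges — or else weaken the statement to subsequential convergence / convergence of $\operatorname{dist}(\theta_k,\,\cdot\,)$. A secondary subtlety is handling the projection $\Pi_\Theta$ correctly when $P$ does not vanish off the feasible set: one has to be careful that the limiting inclusion genuinely separates into the Goldstein subdifferentials of $C$ and of $P$ plus the normal cone $\mathcal{N}_\Theta$, rather than a single Goldstein subdifferential of the sum $\CC=C+P$ — this is where the Fritz-John (rather than plain Goldstein) formulation is needed, and where I would lean on the constrained-stationarity framework of \cite{grimmer2025goldstein} to make the multiplier bookkeeping rigorous.
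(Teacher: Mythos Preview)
Your proposal is correct and follows essentially the paper's own route: recast \eqref{eq:algorithm:main_update} as a stochastic projected gradient method on the smooth surrogate $\CCd$, with the model-based contribution $\eta_k(\jc-\jct)$ entering as an absolutely summable bias via $\sum_k\alpha_k\eta_k<\infty$, invoke the framework of \cite{davis2020stochastic} (using definability of $\CCd$ for the Sard-type hypothesis), and translate $0\in\nabla\CCd(\theta^\star)+\mathcal{N}_\Theta(\theta^\star)$ into the Goldstein Fritz-John condition. The only presentational differences are that the paper pulls the error \emph{outside} the projection (writing $\theta_{k+1}=\Pi_\Theta[\theta_k-\alpha_k\nabla\CCd(\theta_k)]+\alpha_k\xi_k$ and bounding $\xi_k$ via nonexpansiveness) rather than decomposing $d_k$ inside, and that it applies Lemma~\ref{lemma:model_free_estimation:delta_subgradient} to the sum and then argues $\partial_\delta(C+P)\subseteq\partial_\delta C+\partial_\delta P$, whereas your componentwise smoothing $\CCd=\bar C^\delta+P^\delta$ followed by Lemma~\ref{lemma:model_free_estimation:delta_subgradient} on each summand is an equally valid (and arguably cleaner) variant.

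One correction: drop the assertion ``at a minimizer the penalty forces $P(\theta^\star)=0$.'' Neither the paper nor the Fritz-John condition claims feasibility of the limit point; you obtain $0\in\partial_\delta C(\theta^\star)+\partial_\delta P(\theta^\star)+\mathcal{N}_\Theta(\theta^\star)$ directly, and taking $\lambda_0=\lambda_1=\tfrac12$ (using that $\mathcal{N}_\Theta$ is a cone) gives the normalized Fritz-John inclusion without any feasibility argument.
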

The proof of Theorem~\ref{theorem:algorithm:main_convergence_result} requires several preliminary results.
First, we prove that under Assumption~\ref{assumption:algorithm:definability_dynamics_and_cost}
the approximation $\CCd$ of $\CC$ retains definability and Lipschitz continuity.
\begin{lemma}\label{lemma:algorithm:randomized_smoothing_is_definable}
Under Assumptions~\ref{assumption:algorithm:definability_dynamics_and_cost} and \ref{assumption:algorithm:parameter_set_is_compact_and_definable}, $\CCd$ is Lipschitz continuous and definable.
\end{lemma}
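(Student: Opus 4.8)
The plan is to establish the two claimed properties of $\CCd$ — Lipschitz continuity and definability — separately, since they come from different sources. Lipschitz continuity is essentially already quoted in the text: under Assumption~\ref{assumption:algorithm:definability_dynamics_and_cost}, $\CC(\theta)=C(x(\theta),u(\theta))+P(\theta)$ is a composition of locally Lipschitz functions, and since $\Theta$ is compact (Assumption~\ref{assumption:algorithm:parameter_set_is_compact_and_definable}), $\CC$ is globally Lipschitz on $\Theta_\delta=\Theta+\delta\mathbb{B}$ (also compact) with some constant $L_{\CC}$. Then \cite[Proposition 2.3]{lin2022gradient} gives that $\CCd$ is Lipschitz (indeed continuously differentiable) with constant $L_{\CCd}=c\sqrt{n_\theta}\,\delta^{-1}L_{\CC}$. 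So the first half is a short argument assembling pieces already in the excerpt.

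The substantive part is \textbf{definability of} $\CCd$. First I would argue that $\CC$ itself is definable: by Assumption~\ref{assumption:algorithm:definability_dynamics_and_cost}, $f$, $g$, $C$, $P$, and $\operatorname*{MPC}$ are all definable in a common o-minimal structure $\mathcal{O}$, and the closed-loop maps $x(\theta),u(\theta)$ are obtained by finitely many ($T$ steps) compositions of $f(\cdot,\operatorname*{MPC}(\cdot,\theta))$; since definable functions are stable under composition and finite assembly, $x(\theta)$ and $u(\theta)$ are definable, hence so is $\CC(\theta)=C(x(\theta),u(\theta))+P(\theta)$. The remaining issue is that $\CCd(\theta)=\mathbb{E}_{w\sim U(\mathbb{B})}[\CC(\theta+\delta w)] = \frac{1}{\operatorname{vol}(\delta\mathbb{B})}\int_{\delta\mathbb{B}} \CC(\theta+w)\,dw$ involves an \emph{integration}, and definability is \emph{not} in general preserved under integration in an arbitrary o-minimal structure. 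The standard fix is to invoke a result on parametrized integrals of definable functions: by the theorem of Comte–Lion–Rolin (and related work of Cluckers–Miller) on preservation of definability under integration, if $\CC$ is definable in an o-minimal structure then the function $\theta\mapsto\int_{\delta\mathbb{B}}\CC(\theta+w)\,dw$ is definable in the o-minimal structure $\mathbb{R}_{\mathrm{an},\exp}$ (or, more precisely, in the structure generated by adding such integrals). I would cite this result and conclude that $\CCd$ is definable, possibly in a larger o-minimal structure than the one in which $\CC$ lives — which is harmless, since all that is needed downstream is path-differentiability and the existence of conservative Jacobians, which hold for locally Lipschitz definable functions in \emph{any} o-minimal structure.

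The \textbf{main obstacle} is precisely this integration step: one must be careful that definability is genuinely preserved, and the cleanest route is to cite the Comte–Lion–Rolin theorem on definable parametric integrals rather than attempt an elementary argument. An alternative, if one wishes to stay within a fixed structure, is to note that $\CC$ is not merely definable but \emph{piecewise semialgebraic}/subanalytic in favorable cases (the MPC solution map of a QP is semialgebraic in $(x,\theta)$, and if $f,g,C,P$ are semialgebraic then so is $\CC$), in which case the integral is definable in $\mathbb{R}_{\mathrm{an}}$ by subanalyticity-preservation results; but the general definable case requires the parametrized-integral machinery. A secondary, minor point to check is that $\CCd$ is well-defined and finite on all of $\Theta_\delta$ — but this is immediate from boundedness of $\CC$ on the compact set $\Theta+2\delta\mathbb{B}$. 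Once definability and local Lipschitzness are in hand, Lemma~3 of \cite{bolte2021nonsmooth} gives path-differentiability of $\CCd$ for free, which is what the convergence analysis needs.
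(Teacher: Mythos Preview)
Your proposal is correct and follows essentially the same route as the paper: first argue that $\CC$ is locally Lipschitz and definable by composition, then invoke \cite[Proposition~2.3]{lin2022gradient} for Lipschitz continuity of $\CCd$, and finally appeal to a result on preservation of definability under parametric integration. The only difference is the reference used for the integration step: the paper cites \cite{speissegger1999pfaffian}, whereas you invoke Comte--Lion--Rolin / Cluckers--Miller and correctly flag that the resulting structure may be larger than the original one --- a subtlety the paper does not make explicit.
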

\begin{proof}
Under Assumptions~\ref{assumption:algorithm:definability_dynamics_and_cost} and \ref{assumption:algorithm:parameter_set_is_compact_and_definable},
the functions $x(\theta)$ and $u(\theta)$ are definable and locally Lipschitz since both these properties are preserved by composition \cite[Exercise 1.11]{coste1999introduction}.
Next, the function $y\mapsto \max \{ y_1,y_2 \}$ is the pointwise maximum of two linear functions, 
and it is therefore locally Lipschitz and definable 
(it is, in fact, semialgebraic).
This proves that $\CC$ is locally Lipschitz and definable. 
Since integration (and therefore expectation) preserves definability \citep{speissegger1999pfaffian}, $\CCd$ is definable for every $\delta>0$.
Moreover, since $\CC$ is locally Lipschitz and therefore Lipschitz if restricted to $\Theta$,
by \cite[Proposition 2.3]{lin2022gradient},
$\CCd$ is Lipschitz for any $\delta>0$.
\end{proof}
Next, we show that the zeroth-order update dominates the model-based one for all $k$ large enough.
\begin{lemma}\label{lemma:algorithm:data_drive_eventually_dominates}
Under Assumptions~\ref{assumption:algorithm:definability_dynamics_and_cost} and \ref{assumption:algorithm:parameter_set_is_compact_and_definable},
if $\alpha_k$ and $\eta_k$ satisfy \eqref{eq:algorithm:stepsize_condition},
then $\sum_{k\in\N} \alpha_k \eta_k \|\jc(\theta_k)\|< + \infty$.
\end{lemma}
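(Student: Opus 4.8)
The plan is to show that $\|\jc(\theta_k)\|$ is uniformly bounded over $k$, after which the claim follows immediately from the summability hypothesis $\sum_{k\in\N}\alpha_k\eta_k < +\infty$ in \eqref{eq:algorithm:stepsize_condition:eta}. Indeed, if $\|\jc(\theta_k)\|\le M$ for all $k$, then $\sum_{k\in\N}\alpha_k\eta_k\|\jc(\theta_k)\| \le M\sum_{k\in\N}\alpha_k\eta_k < +\infty$. So the entire content of the lemma reduces to establishing a uniform bound on the norm of the model-based gradient estimate.

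To obtain that bound, I would trace through the recursive construction \eqref{eq:model_based_estimation:backprop_approximate_1}--\eqref{eq:model_based_estimation:backprop_approximate_2}. Under Assumption~\ref{assumption:algorithm:definability_dynamics_and_cost}, the functions $g$, $C$, and $\operatorname*{MPC}$ are locally Lipschitz and definable, hence path-differentiable, and their conservative Jacobians $\J_g$, $\J_C$, $\J_{\operatorname*{MPC}}$ are locally bounded (this is part of the definition of a conservative Jacobian). By Assumption~\ref{assumption:algorithm:parameter_set_is_compact_and_definable}, $\Theta$ is compact; since the closed-loop maps $x(\theta)$ and $u(\theta)$ are continuous (indeed locally Lipschitz, as argued in the proof of Lemma~\ref{lemma:algorithm:randomized_smoothing_is_definable}), the image of $\Theta$ under $\theta\mapsto(x(\theta),u(\theta))$ is contained in a compact set $K$. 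On $K$ the local bounds on the conservative Jacobians become uniform bounds: there exist constants $L_g, L_C, L_{\operatorname*{MPC}}$ such that every selected element $\jj_g(x_t,u_t)$, $\jj_C(x,u)$, $\jj_{\operatorname*{MPC}}(x_t,\theta)$ along any trajectory generated by $\theta_k\in\Theta$ has norm bounded by the respective constant. Then an induction on $t\in\Z_{[0,T]}$ shows $\|\jj_{x_t}(\theta_k)\|$ and $\|\jj_{u_t}(\theta_k)\|$ are bounded by constants depending only on $T$, $L_g$, $L_{\operatorname*{MPC}}$ (the base case $\jj_{x_0}=0$ since $x_0$ is fixed, and each recursion step adds and multiplies bounded quantities); feeding these into \eqref{eq:model_based_estimation:backprop_approximate_2} yields $\|\jc(\theta_k)\|\le M$ with $M$ independent of $k$.

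A subtle point worth stating carefully is that the constants do not depend on $k$: this is precisely because the trajectories all start from the same $x_0$ and all parameters $\theta_k$ lie in the fixed compact $\Theta$, so the relevant arguments of $g$, $C$, $\operatorname*{MPC}$ stay in one fixed compact set regardless of $k$. The main (mild) obstacle is ensuring that the closed-loop map $\theta\mapsto(x(\theta),u(\theta))$ is well-defined and continuous on all of $\Theta$ — i.e.\ that the MPC problem \eqref{eq:problem_formulation:mpc_problem} remains feasible along the closed-loop evolution for every $\theta\in\Theta$ — which is handled by the recursive feasibility device referenced from \cite[Section VI-D]{zuliani2023bp}; granting that, the argument is a routine uniform-boundedness induction and the summability conclusion is then immediate. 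I would therefore organize the write-up as: (1) invoke compactness of $\Theta$ and continuity of the closed-loop map to get a compact $K$ containing all trajectory points; (2) deduce uniform bounds on the conservative Jacobian selections via local boundedness on $K$; (3) induct on $t$ to bound $\|\jj_{x_t}(\theta_k)\|,\|\jj_{u_t}(\theta_k)\|$ uniformly in $k$; (4) conclude $\|\jc(\theta_k)\|\le M$ and combine with \eqref{eq:algorithm:stepsize_condition:eta}.
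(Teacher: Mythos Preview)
Your proposal is correct and follows essentially the same approach as the paper: both reduce the claim to a uniform bound on $\|\jc(\theta_k)\|$ via compactness of $\Theta$, boundedness of the closed-loop trajectories, local boundedness of the conservative Jacobians of $g$, $\operatorname*{MPC}$, and $C$ on the resulting compact set, and a finite-horizon induction through the recursion \eqref{eq:model_based_estimation:backprop_approximate_1}--\eqref{eq:model_based_estimation:backprop_approximate_2}. Your justification of the Jacobian bounds via the defining local-boundedness property of conservative Jacobians is arguably cleaner than the paper's ``almost surely equal to $\nabla g$'' phrasing, but the structure and content of the two arguments are the same.
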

\begin{proof}
Since $\sum_{k\in\N} \alpha_k \eta_k < + \infty$ by \eqref{eq:algorithm:stepsize_condition:eta},
it suffices to prove that $\|\jc(\theta_k)\|$ is bounded for all $\theta_k$.
To prove this,
observe that $\|x(\theta)\| \leq C_x$ and $\|u(\theta)\| \leq C_u$ for all $\theta\in\Theta$ for some (unknown) $C_x,C_u < + \infty$ since both $x(\theta)$ and $u(\theta)$ are locally Lipschitz and $\Theta$ is compact by Assumption~\ref{assumption:algorithm:parameter_set_is_compact_and_definable}. 
Since $J_{g}$ in \eqref{eq:model_based_estimation:backprop_approximate_1} is an element of the conservative Jacobian of the locally Lipschitz definable function $g$,
its value is almost surely equal to $\nabla g$,
and it is therefore almost surely bounded above by the Lipschitz constant of $g$ on $\Theta$.
The same goes for $J_{\operatorname*{MPC}}$.
Since $J_x$ and $J_u$ are constructed through the recursion \eqref{eq:model_based_estimation:backprop_approximate_1} involving bounded quantities,
and the horizon $T$ of the problem is finite,
$J_{x_t}$ and $J_{u_t}$ are bounded for all $t$,
and therefore so are $J_x$ and $J_u$.
Finally, since $C$ is Lipschitz on the set $C_x \mathbb{B} \times C_u \mathbb{B}$,
$\jc$ is bounded.
This completes the proof.
\end{proof}
Our final technical result is about proving the finiteness of the variance of $\jct$ for each $k$.
\begin{lemma}\label{lemma:algorithm:jct_has_finite_variance}
Under Assumptions~\ref{assumption:algorithm:definability_dynamics_and_cost} and \ref{assumption:algorithm:parameter_set_is_compact_and_definable},
we have for all $k\in\N$ that
\begin{align*}
\mathbb{E}_{v_k}[\jct(\theta_k,v_k)-\nabla \CCd(\theta_k)]=0, ~~ \mathbb{E}_{v_k}[\|\jct(\theta_k,v_k)-\nabla \CCd(\theta_k)\|^2] < + \infty.
\end{align*}
\end{lemma}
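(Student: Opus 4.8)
The plan is to establish the two claims separately, noting that the first equality is essentially a restatement of Lemma~\ref{lemma:model_free_estimation:gradient_sampling} and the work is all in the second (finite-variance) claim. For the unbiasedness, recall that $v_k \sim U(\mathbb{S})$ is drawn independently of $\theta_k$ (which depends only on $v_0,\dots,v_{k-1}$ and the deterministic model-based directions); hence conditioning on $\theta_k$ and applying Lemma~\ref{lemma:model_free_estimation:gradient_sampling} pointwise at $\theta=\theta_k$ gives $\mathbb{E}_{v_k}[\jct(\theta_k,v_k)\mid\theta_k]=\nabla\CCd(\theta_k)$, which is exactly the first assertion. One subtlety to address here is that Lemma~\ref{lemma:model_free_estimation:gradient_sampling} is stated for $\theta\in\Theta$ and that $\nabla\CCd$ is well-defined there; this is guaranteed by Lemma~\ref{lemma:algorithm:randomized_smoothing_is_definable}, which gives that $\CCd$ is Lipschitz (hence $\nabla\CCd$ exists and is bounded on $\Theta$) under Assumptions~\ref{assumption:algorithm:definability_dynamics_and_cost} and~\ref{assumption:algorithm:parameter_set_is_compact_and_definable}.

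For the variance bound, I would first reduce it to bounding $\mathbb{E}_{v_k}[\|\jct(\theta_k,v_k)\|^2]$, since $\mathbb{E}\|X-\mathbb{E}X\|^2 \le \mathbb{E}\|X\|^2$ and $\nabla\CCd(\theta_k)$ is a fixed (bounded) vector once we condition on $\theta_k$. From the definition \eqref{eq:model_free_estimation:zeroth_order},
\begin{align*}
\|\jct(\theta_k,v_k)\| = \frac{n_\theta}{\delta}\,|\CC(\theta_k+\delta v_k)-\CC(\theta_k)|\,\|v_k\| = \frac{n_\theta}{\delta}\,|\CC(\theta_k+\delta v_k)-\CC(\theta_k)|,
\end{align*}
using $\|v_k\|=1$. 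Now $\CC$ is locally Lipschitz and definable (shown inside the proof of Lemma~\ref{lemma:algorithm:randomized_smoothing_is_definable}), hence Lipschitz with some constant $L_\CC$ on the compact set $\Theta_\delta=\Theta+\delta\mathbb{B}$; therefore $|\CC(\theta_k+\delta v_k)-\CC(\theta_k)|\le L_\CC\,\delta\|v_k\| = L_\CC\delta$, which gives the deterministic bound $\|\jct(\theta_k,v_k)\|\le n_\theta L_\CC$ for every realization of $v_k$ and every $\theta_k\in\Theta$. Squaring and taking expectations yields $\mathbb{E}_{v_k}[\|\jct(\theta_k,v_k)\|^2]\le n_\theta^2 L_\CC^2 < +\infty$, and combining with the reduction above finishes the second claim.

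The argument is short and I do not anticipate a genuine obstacle; the only point requiring care is making precise that the expectations are conditional on $\theta_k$ (so that $\theta_k$ may be treated as a fixed point in $\Theta$ and $\nabla\CCd(\theta_k)$ as a constant vector), and that the Lipschitz constant $L_\CC$ is uniform over the relevant compact set — both of which follow from compactness of $\Theta$ (Assumption~\ref{assumption:algorithm:parameter_set_is_compact_and_definable}) together with local Lipschitzness of $\CC$. I would also remark that the bound $n_\theta L_\CC$ is uniform in $k$, which is slightly stronger than the per-$k$ statement and is what the convergence analysis in Theorem~\ref{theorem:algorithm:main_convergence_result} will ultimately need when invoking the results of \cite{davis2020stochastic}.
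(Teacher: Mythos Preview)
Your proposal is correct and follows essentially the same route as the paper: unbiasedness is read off from Lemma~\ref{lemma:model_free_estimation:gradient_sampling}, and the second-moment bound comes from the Lipschitz continuity of $\CC$ on the compact set $\Theta_\delta$. Your version is in fact slightly cleaner than the paper's, since you use $\|v_k\|=1$ immediately to get the deterministic bound $\|\jct(\theta_k,v_k)\|\le n_\theta L_\CC$ (uniform in $k$), whereas the paper expands the variance as $-\|\nabla\CCd(\theta_k)\|^2+\mathbb{E}_{v_k}[\|\jct(\theta_k,v_k)\|^2]$ and then bounds the latter via $\mathbb{E}_{v_k}[\|v_k\|^4]$, which of course also equals~$1$.
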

\begin{proof}
The first equation is trivial since $\mathbb{E}[\jct(\theta_k,v_k)]=\nabla \CCd(\theta_k)$ by Lemma~\ref{lemma:model_free_estimation:gradient_sampling}.
Next,
we have
\begin{align*}
\mathbb{E}_{v_k}[\|\jct(\theta_k,v_k)-\nabla \CCd(\theta_k)\|^2] & = - \|\nabla \CCd(\theta_k)\|^2 \! + \mathbb{E}_{v_k}[\|\jct(\theta_k,v_k)\|^2] \\
& = - \|\nabla \CCd(\theta_k)\|^2 \! + \mathbb{E}_{v_k}\left[\frac{n_\theta^2 \|v_k\|^2}{\delta^2}(\CC(\theta_k+\delta v_k)-\CC(\theta_{k}))^2\right]\\
& \leq - \|\nabla \CCd(\theta_k)\|^2 \! + \frac{n_\theta^2 L_{\CC}}{\delta^2} \mathbb{E}_{v_k}[\|v_k\|^2 \|\theta_k+\delta v_k - \theta_{k} \|^2], \\
& \leq - \|\nabla \CCd(\theta_k)\|^2 \! + n_\theta^2 L_{\CC} \mathbb{E}_{v_k}[\|v_k\|^4],
\end{align*}
Since $v_k \sim U(\mathbb{S})$, the term on the right is always finite for finite $\theta_k$.
Combining this with the continuity of $\|\nabla \CCd(\cdot)\|$,
we conclude the existence of a function $p:\Theta\to\R_{>0}$ bounded on bounded sets such that $\mathbb{E}_{v_k}[\|\jct(\theta_k)-\nabla \CCd(\theta_k)\|^2] \leq p(\theta_k)$,
concluding the proof.
\end{proof}

\noindent\textbf{Proof of Theorem~\ref{theorem:algorithm:main_convergence_result}}.
We follow Section A in \cite{davis2020stochastic}. 
Note that since $\mathcal{C}^\delta$ is continuously differentiable, we can take its gradient $\nabla \mathcal{C}^\delta$ as a conservative field. 
First, observe that
\begin{align} \label{eq:algorithm:equivalent_update_1}
\theta_{k+1} = \Pi_\Theta [ \theta_k - \alpha_k \nabla \CCd(\theta_k) ] + \alpha_k \xi_k
\end{align}
where $\alpha_k\xi_k = \Pi_\Theta[\theta_k-\alpha_k(1-\eta_k)\jct(\theta_k,v_k)-\alpha_k \eta_k \jc(\theta_k)]- \Pi_\Theta [\theta_k - \alpha_k \nabla \CCd(\theta_k) ]$.
By leveraging the convexity of $\Theta$ and the triangle inequality,
we have $\|\xi_k\| \leq \eta_k [ \|\jc(\theta_k)\| + \| \jct(\theta_k,v_k) \| ] + \|\jct(\theta_k,v_k)-\nabla \CCd(\theta_k)\|$.
Since $\CCd$ is Lipschitz and definable by Lemma~\ref{lemma:algorithm:randomized_smoothing_is_definable},
and both $\theta_k\in \Theta$ and $v_k\in\mathbb{S}$ take on finite values,
there exists a constant $C_J>0$ such that $\|\jc(\theta_k)\|+\|\jct(\theta_k,v_k)\| \leq C_J$ for all $k\in\N$,
which combined with \eqref{eq:algorithm:stepsize_condition:eta} gives $\sum_{k\in\N} \alpha_k \eta_k [ \|\jc(\theta_k)\| + \| \jct(\theta_k,v_k) \| ] < + \infty$.
Next, leveraging Lemma~\ref{lemma:algorithm:data_drive_eventually_dominates}, \eqref{eq:algorithm:stepsize_condition:alpha},
and the compactness of $\Theta$,
we conclude that $\sum_{k\in\N} \alpha_k \|\jct(\theta_k)-\nabla \CCd(\theta_k)\| < + \infty$ by \cite[Lemma 4.1]{davis2020stochastic}, as the summability condition coincides with Assumption A.4 in \cite{davis2020stochastic}.
This proves that ${\textstyle \sum_{k\in\N}} \alpha_k \xi_k < + \infty$. 
Next, letting $G_k(\theta) = - \nabla \CCd(\theta_k) - \alpha_k^{-1}[\theta-\alpha_k \nabla \CCd(\theta)-\Pi_\Theta[\theta-\alpha_k \nabla \CCd(\theta)]]$, 
the update in \eqref{eq:algorithm:equivalent_update_1} can be written as 
\begin{align*}
\theta_{k+1} = \theta_k + \alpha_k[g_k+\xi_k],~~~ g_k \in G_k(\theta_k).
\end{align*}
The final argument of this proof relies on Theorem 3.2 of \cite{davis2020stochastic},
which we will invoke to prove convergence to a critical point of \eqref{eq:model_based_estimation:unconstrained_policy_optimization_problem_temp}.
To utilize Theorem 3.2 of \cite{davis2020stochastic} we require all items in Assumption A of \cite{davis2020stochastic} to hold true.
First, observe that items 1-4 hold thanks to Assumption~\ref{assumption:algorithm:parameter_set_is_compact_and_definable} and $\sum_{k\in\N} \alpha_k \xi_k < + \infty$. 
It only remains to show that item 5 holds, that is, 
that given any unbounded subset $\mathcal{K}$ of $\N$ for which $\theta_j \to \bar{\theta}$, $j\in\mathcal{K}$,
we have $\operatorname*{dist} ( 1/k {\textstyle\sum_{j=0}^{k}} g_j, G(\bar{\theta}) ) \to 0$, where $G(\bar{\theta}) = - \nabla \CCdinf(\bar{\theta}) - \mathcal{N}_\Theta(\bar{\theta})$ and $\CCdinf(\theta)=\mathbb{E}_v[\mathcal{C}(\theta+\delta v)]$.
Since $G(\bar{\theta})$ is a convex set, we have $\operatorname*{dist} ( 1/k {\textstyle\sum_{j=0}^{k}} g_j, G(\bar{\theta}) ) \leq 
1/k {\textstyle\sum_{j=0}^{k}} \operatorname*{dist} (g_j,G(\bar{\theta})),$
meaning that it suffices to show that $\operatorname*{dist} (g_j,G(\bar{\theta}))\to 0$ as $j\to \infty$, $j\in \mathcal{K}$.
Since for each $j$ we have $\Pi_\Theta[\theta_j-\alpha_j \nabla \CCd(\theta_j)] \in \theta_j - \alpha_j \nabla \CCd(\theta_j) - \mathcal{N}_\Theta(\Pi_\Theta[\theta_j-\alpha_j \nabla \CCd(\theta_j)])$,
we have by definition that $g_j = -\nabla \CCd(\theta_j) - \alpha_j^{-1} z_j$, for some $z_j \in \mathcal{N}_\Theta(\Pi_\Theta[\theta_j-\alpha_j \nabla \CCd(\theta_j)])$, and therefore $g_j - G(\bar{\theta}) = -\nabla \CCd(\theta_j) - \alpha_j^{-1} z_j + \nabla \CCdinf(\bar{\theta}) + \mathcal{N}_\Theta(\bar{\theta})$. 
Due to the outer semicontinuity of the normal cone $\mathcal{N}_\Theta$ of a convex set $\Theta$, 
and that $\theta_{j+1}=\Pi_\Theta[\theta_j-\alpha_j \nabla \CCd(\theta_j)] \to \bar{\theta}$,
we have that in the limit $\alpha_j^{-1} z_j \in \mathcal{N}_\Theta(\bar{\theta})$.
Moreover, by continuity of $\nabla\CCd$, $\nabla \CCd(\theta_j) \to \nabla \CCdinf(\bar{\theta})$.
This proves that all items in Assumption A of \cite{davis2020stochastic} are satisfied.
Since Assumption B of \cite{davis2020stochastic} is also satisfied thanks to Lemma~\ref{lemma:algorithm:randomized_smoothing_is_definable} and Theorem 5.8 of \cite{davis2020stochastic},
we conclude that \eqref{eq:algorithm:main_update} converges to a point satisfying $0\in \nabla \CCd(\bar{\theta}) + \mathcal{N}_\Theta(\bar{\theta})$. 
By Lemma~\ref{lemma:model_free_estimation:delta_subgradient},
this means that $0\in \partial_\delta [C(\bar{\theta})+P(\bar{\theta})]+ \mathcal{N}_\Theta(\bar{\theta})$.
Since $\bigcup_{\vartheta\in \delta\mathbb{B}} [\partial_c C(\theta+\vartheta) + \partial_c P(\theta+\vartheta)] \subseteq \bigcup_{\vartheta\in \delta\mathbb{B}} [\partial_c C(\theta+\vartheta)] + \bigcup_{\vartheta\in \delta\mathbb{B}}[\partial_c P(\theta+\vartheta)]$, and this inclusion is preserved if we consider the convex hull of both sets, we have that $\partial_\delta [C(\bar{\theta})+P(\bar{\theta})] \subseteq \partial_\delta C(\bar{\theta}) + \partial_\delta P(\bar{\theta})$, and therefore there exist $\lambda_0,\lambda_1 \geq 0$ with $\lambda_0+\lambda_1=1$ such that
\begin{align}
0 \in \lambda_0 \partial_\delta C(\bar{\theta}) + \lambda_1 \partial_\delta P(\bar{\theta}) + \mathcal{N}_\Theta(\bar{\theta}), ~~ \lambda_0,\lambda_1 \geq 0, ~~ \lambda_0+\lambda_1=1, \label{eq:algorithm:fritz_john_condition}
\end{align}
proving that the algorithm converges to a Goldstein Fritz-John $\delta$-critical point of \eqref{eq:algorithm:problem_penalty}.\hfill\scalebox{1.3}{$\blacksquare$}\\[-0.5em]

\noindent Under stronger assumptions on the constraint $P(\theta) = 0$ (for example the constraint qualification given in \cite{grimmer2025goldstein}),
one can additionally prove that any feasible solution $\bar{\theta}$ satisfies \eqref{eq:algorithm:fritz_john_condition} with $\lambda_0>0$,
that is, that $\bar{\theta}$ is a KKT point of \eqref{eq:model_based_estimation:unconstrained_policy_optimization_problem}.
\rz{We leave this as a direction for future work.}

\section{Simulation Results}\label{section:simulation}
We evaluate our approach on the 12-dimensional quadcopter model from \cite{abdulkareem2022modeling} whose state vector comprises the position $(p_x,p_y,p_z)$, velocity $(v_x,v_y,v_z)$, Euler angles $(\phi,\vartheta,\psi)$, and angular velocity $(p,q,r)$ in the body frame.\footnote{The code will be made available at \url{https://github.com/RiccardoZuliani98/bpmpc}}
The control inputs are the rotation speeds $\omega_i$ of the four rotors.
The system is subject to the constraints $\omega_i\in [0,630]$, $v_x,v_y,v_z\in[-2,2]$, $\phi,\vartheta$, $\psi\in [-\pi/4,\pi/4]$, and $p,q,r\in[-\pi/8,\pi/8]$. 
We implement \eqref{eq:problem_formulation:mpc_problem} with $N=12$ and
\begin{align*}
\ell_\theta(x,u) = \|x-x_\text{ref}\|_Q^2 + \|u-u_\text{ref}\|_R^2, \quad \ell_{N,\theta}(x) = \|x-x_\text{ref}\|_P^2,
\end{align*}
where $x_\text{ref}=(-6,-3.5,0,\mathbf{0}_9)$ and $u_\text{ref}$ is the input required to maintain the drone at a hovering state. The parameter $\theta=(p_Q,p_R,p_P)$ defines the stage cost matrices $Q=\operatorname*{diag}(p_Q^2)+10^{-6}I$ and $R=\operatorname*{diag}(p_R^2)+10^{-6}I$, as well as the terminal cost $P=LL^\top+10^{-6}I$, where $L$ is a lower-triangular matrix containing the entries of $p_P$.
To handle constraint violations, we relax the state constraints using slack variables, which are penalized using both a quadratic and a linear penalty (scaled by a factor of $25$).
We model the system as linear choosing \( g(x,u) = A x + B u \), where the matrices \( A \) and \( B \) are identified via least-squares regression on 100 closed-loop trajectories collected near the target point under a stabilizing MPC controller. In practice, if such a controller is not available, these trajectories could instead be generated by a human pilot.
The upper-level cost is
\begin{align*}
C(x,u) &= \|x_T-x_\text{ref}\|_\mathcal{P}^2 + {\textstyle\sum_{t=0}^{T-1}} \|x_t-x_\text{ref}\|_\mathcal{Q}^2 + \|u_t-u_\text{ref}\|_\mathcal{R}^2
\end{align*}
where $T=200$, $\mathcal{Q}=\operatorname*{diag}(\mathbf{1}_6,0.1\cdot\mathbf{1}_6)$, $\mathcal{R}=0.01\cdot I$,
and $\mathcal{P}$ obtained by solving the DARE with the identified dynamics and cost given by $\mathcal{Q}$ and $\mathcal{R}$.
The penalty term is $P(\theta)=10^4\operatorname*{dist}_1(x(\theta),\mathcal{X}^{T+1})$.

We choose $\delta=10^{-5}$, $\alpha_k=2 \cdot 10^{-4} \log(k+2) / (k+1) ^ {0.8}$ and $\eta_k = 1/(k+1)^\gamma$, with $\gamma\in \{ 0.25,0.5,0.75 \}$, and train for $300$ iterations starting with $Q=\mathcal{Q}$, $R=\mathcal{R}$, and $P=\mathcal{P}$.
To isolate the contribution of each optimization component, we additionally train using the same stepsizes and initial conditions with $\eta_k=1$ and $\eta_k=0$.
The behavior of the tracking cost and the constraint violation across iterations can be seen in Figure~\ref{fig:tracking_plots}. The proposed approach demonstrates the most efficient convergence, whereas the purely model-based method exhibits rapid improvement during the initial iterations but later becomes unstable and fails to converge. The purely zeroth-order method, on the other hand, achieves consistent progress, but at a slower rate. The proposed hybrid strategy benefits from fast initial transients, enabled by the model, and improved long-term convergence, supported by zeroth-order correction. 
Notably, smaller values of $\gamma$, corresponding to greater reliance on the model, lead to faster initial convergence, but also to more pronounced instability.

\begin{figure}[t]
    \centering
    \ifcompile
      \pgfplotsset{table/search path={Figures/Quadcopter}}
      \begin{tikzpicture}
\begin{groupplot}[group style={group size=2 by 1, horizontal sep=10pt}]

\nextgroupplot[
  legend cell align={left},
  legend style={
      fill opacity=0.8, 
      draw opacity=1, 
      text opacity=1, 
      draw=grid_gray,
      row sep=-1.5pt,
      font=\small,
  },
  tick label style={font=\scriptsize},
  tick align=outside,
  tick pos=left,
  x grid style={grid_gray},
  xlabel={\small Iteration},
  xmajorgrids,
  xmin=-5, xmax=305,
  xtick style={color=black},
  y grid style={grid_gray},
  ylabel={\small Tracking cost},
  ymajorgrids,
  ymin=1000, ymax=2000,
  ytick style={color=black},
  width=7.5cm, height=4cm,
  line width=0.6,
  grid style={dashed,grid_gray,thin},
  scaled y ticks=base 10:-2
]
\addplot [smooth,model] table [col sep=space] {./cost/cost_mb.dat};
\addplot [zo] table [col sep=space] {./cost/cost_zo.dat};
\addplot [bl25] table [col sep=space] {./cost/cost_bl25.dat};
\addplot [bl50] table [col sep=space] {./cost/cost_bl5.dat};
\addplot [bl75] table [col sep=space] {./cost/cost_bl75.dat};

\coordinate (zoomNW) at (axis cs:5,1300);
\coordinate (zoomSE) at (axis cs:50,1100);
\draw[thick, connect_gray] (zoomNW) rectangle (zoomSE);

\nextgroupplot[
  legend cell align={left},
  legend style={
      fill opacity=0.8, 
      draw opacity=1, 
      text opacity=1, 
      draw=grid_gray,
      row sep=-2.5pt,
      font=\footnotesize,
      legend columns=2,
  },
  axis y line=box,
  tick align=outside,
  x grid style={grid_gray},
  xlabel={\small Iteration},
  xmajorgrids,
  xmin=-9.95, xmax=305,
  xtick pos=left,
  xtick style={color=black},
  y grid style={grid_gray},
  ylabel={\small Constraint violation},
  ymajorgrids,
  ymin=-0.058038046702463, ymax=1.2,
  ytick pos=right,
  ytick style={color=black},
  yticklabel style={anchor=west},
  width=7.5cm, height=4cm,
  tick label style={font=\scriptsize},
  line width=0.6,
  grid style={dashed,grid_gray,thin}
]
\addplot [smooth, model] table [col sep=space] {cst/cst_mb.dat}; 
\addlegendentry{Model}
\addplot [bl25] table [col sep=space] {cst/cst_bl25.dat}; 
\addlegendentry{$\gamma=0.25$}
\addplot [bl50] table [col sep=space] {cst/cst_bl5.dat};  
\addlegendentry{$\gamma=0.5$}
\addplot [bl75] table [col sep=space] {cst/cst_bl75.dat}; 
\addlegendentry{$\gamma=0.75$}
\addplot [zo] table [col sep=space] {cst/cst_zo.dat}; 
\addlegendentry{Zeroth-order}
\end{groupplot}

\begin{axis}[
  at={($(group c1r1.north east)+(-2mm,-2mm)$)},
  anchor=north east,
  width=3.6cm, height=3cm,
  xmin=5, xmax=50,
  ymin=1100, ymax=1300,
  scaled y ticks=base 10:-2,
  axis line style={connect_gray, semithick},
  tick style={draw=none},
  xtick=\empty, ytick=\empty,
  xlabel={}, ylabel={},
  every axis plot/.append style={line width=0.35pt},
  axis background/.style={fill=white, fill opacity=1, draw=black},
]
\addplot [model] table [col sep=space] {./cost/cost_mb.dat};
\addplot [zo] table [col sep=space] {./cost/cost_zo.dat};
\addplot [bl25,smooth]    table [col sep=space] {./cost/cost_bl25.dat};
\addplot [bl50]   table [col sep=space] {./cost/cost_bl5.dat};
\addplot [bl75]  table [col sep=space] {./cost/cost_bl75.dat};

\coordinate (insetNW) at (rel axis cs:0,1);
\coordinate (insetSE) at (rel axis cs:1,0);
\end{axis}

\draw[semithick, connect_gray] (zoomNW) -- (insetNW);
\draw[semithick, connect_gray] (zoomSE) -- (insetSE);

\end{tikzpicture}~~
    \else
      \includegraphics{Figures/Quadcopter/quadcopter.pdf}
    \fi
    \vspace{-0.2cm}
    \caption{Tracking cost and constraint violation across iterations.}
    \label{fig:tracking_plots}
    \vspace{-0.2cm}
\end{figure}

To further assess policy quality, we compare the trajectory generated by the trained controller with that obtained from tuning using the exact nonlinear model.  
As shown in Figure~\ref{fig:trajectory_plots}, the two trajectories exhibit strong qualitative agreement across all states. 
Quantitatively, the trained controller achieves a final cost of $1084.91$, compared to $1081.3$ for the controller optimized with perfect model knowledge. This corresponds to less than $1\%$ suboptimality, indicating that the learned parameters are nearly optimal despite using an approximate model and noisy gradients.

More simulation results are reported in Appendix~\ref{section:ablation}.

\begin{figure}[t]
    \centering
    \ifcompile
      \pgfplotsset{table/search path={Figures/Quadcopter_trajectory}}
\begin{tikzpicture}

\begin{groupplot}[group style={group size=2 by 1,horizontal sep=10pt}]
\nextgroupplot[
legend cell align={left},
legend style={
  fill opacity=0.8, 
  draw opacity=1, 
  text opacity=1, 
  draw=grid_gray,
  legend columns = 3,
  font=\small},
tick align=outside,
tick pos=left,
x grid style={grid_gray},
xlabel={\small Time [s]},
xmajorgrids,
xmin=-0.3, xmax=10.3,
xtick style={color=black},
width=7.5cm, height=4cm,
y grid style={grid_gray},
ylabel={\small Position [m]},
ymajorgrids,
ymin=-6.39922886793102, ymax=2.39996327942529,
ytick style={color=black},
line width=0.6,
grid style={dashed,grid_gray, semithick},
tick label style={font=\scriptsize},
]
\addplot [x_ours] table [col sep=space] {./trajectory/x_out_0.dat};
\addlegendentry{$x$}
\addplot [y_ours] table [col sep=space] {./trajectory/x_out_1.dat};
\addlegendentry{$y$}
\addplot [z_ours] table [col sep=space] {./trajectory/x_out_2.dat};
\addlegendentry{$z$}
\addplot [x_best, forget plot] table [col sep=space] {./trajectory/x_nlp_0.dat};
\addplot [y_best, forget plot] table [col sep=space] {./trajectory/x_nlp_1.dat};
\addplot [z_best, forget plot] table [col sep=space] {./trajectory/x_nlp_2.dat};

\nextgroupplot[
axis y line=box,
tick align=outside,
legend cell align={left},
legend style={
  fill opacity=0.8,
  draw opacity=1,
  text opacity=1,
  draw=grid_gray,
  legend columns = 3,
  font=\small},
tick align=outside,
x grid style={grid_gray,semithick},
xlabel={\small Time [s]},
xmajorgrids,
xmin=-0.3, xmax=10.3,
xtick pos=left,
xtick style={color=black},
y grid style={grid_gray,semithick},
ylabel={\small Euler Angles [rad]},
ymajorgrids,
ymin=-0.355775629679425, ymax=0.332098623712708,
ytick pos=right,
ytick style={color=black},
yticklabel style={anchor=west},
line width=0.6,
grid style={dashed,gray7,semithick},
width=7.5cm, height=4cm,
tick label style={font=\scriptsize},
]
\addplot [x_ours] table [col sep=space] {./trajectory/x_out_6.dat};
\addlegendentry{$\phi$}
\addplot [y_ours] table [col sep=space] {./trajectory/x_out_7.dat};
\addlegendentry{$\theta$}
\addplot [z_ours] table [col sep=space] {./trajectory/x_out_8.dat};
\addlegendentry{$\psi$}

\addplot [x_best] table [col sep=space] {./trajectory/x_nlp_6.dat};
\addplot [y_best] table [col sep=space] {./trajectory/x_nlp_7.dat};
\addplot [z_best] table [col sep=space] {./trajectory/x_nlp_8.dat};
\end{groupplot}

\end{tikzpicture}
      \vspace{-0.6cm}
    \else
      \includegraphics{Figures/Quadcopter_trajectory/quadcopter_trajectory.pdf}
      \vspace{-0.6cm}
    \fi
    \vspace{-0.2cm}
    \caption{Comparison of position (left) and attitude (right) trajectories obtained with the trained controller (solid) and the controller tuned using the exact model (marked dash-dotted).}
    \label{fig:trajectory_plots}
    \vspace{-0.7cm}
\end{figure}

\section{Conclusions and Future Work}\label{section:conclusion}
We introduced a policy optimization framework for MPC policies that combines model-based information with zeroth-order gradient estimation achieving fast learning transients while guaranteeing convergence to a critical point even under imperfect models. 
This approach is well-suited to settings where accurate modeling is difficult, offering robustness without sacrificing efficiency. 
We demonstrated the effectiveness of our algorithm on a nonlinear quadcopter task, showing that it achieves near-optimal performance while outperforming purely model-based and model-free baselines. 
Future work will focus on strengthening the safety guarantees of the approach.

\bibliography{Sources/ref.bib}

\appendix

\section{Functions Definable in an o-minimal Geometry}
\label{section:preliminaries:definable_functions}
\begin{definition}[\hspace{1sp}{Definitions 1.4 and 1.5, \cite{coste1999introduction}}]\label{def:definable}
A collection $\mathcal{O}=(\mathcal{O}_n)_{n\in\N}$, where $\mathcal{O}_n \subset 2^{\R^n}$ for each $n\in\N$, is an \textit{o-minimal structure} on $(\R,+,\cdot)$ if: 1) ~all semialgebraic subsets of $\R^n$ belong to $\mathcal{O}_n$; 
2) ~$\mathcal{O}_1$ is the set of all finite unions of points and intervals; 
3) ~$\mathcal{O}_n$ is a boolean subalgebra of $2^{\R^n}$; 
4) ~$A\times B \in \mathcal{O}_{n+m}$ for all $A,B\in \mathcal{O}_n \times \mathcal{O}_m$; 
5) ~the set $\{ v\in\R^n: (v,w) \in A \text{ for some } w\in\R \}$, for any $A\in\mathcal{O}_{n+1}$, belongs to $\mathcal{O}_n$.
A subset of $\R^n$ which belongs to $\mathcal{O}$ is said to be \textit{definable}. 
\end{definition}
A function $\varphi:\R^n\to\R^p$ is \textit{definable} if its graph $\{(x,v):v=\varphi(x)\}$ is definable. 
Definable functions are ubiquitous in control and optimization,
containing, for example, all semialgebraic and globally subanalytic functions.
Moreover, they are stable under most common operations, such as composition, differentiation, and affine transformations.

\section{Differentiating Solutions of Optimization Problems}
\label{section:preliminaries:differentiating_optimization_problems}
We focus on parametric quadratic programs in standard form following \cite{zuliani2023bp}\\[-1.25em]
\begin{subequations}
\noindent\qquad~~
\begin{minipage}[t]{0.4\textwidth}
\begin{align}
\operatorname*{min.}_x &\quad \tfrac{1}{2} x^\top Q(\theta) x + q(\theta)^\top x \notag \\
\text{s.t.} &\quad F(\theta)x = f(\theta), \label{eq:QP} \\
&\quad G(\theta)x \le g(\theta). \notag
\end{align}
\end{minipage}\qquad~~
\begin{minipage}[t]{0.4\textwidth}
\begin{align}
\operatorname*{min.}_{z=(\lambda,\mu)} &\quad \tfrac{1}{2} z^\top H(p) z + h(p)^\top z, \notag \\[-0.2em]
\text{s.t.} &\quad \lambda \ge 0, \qquad\qquad\qquad \label{eq:QP_dual}
\end{align}
\end{minipage}\qquad
\end{subequations}\\

\noindent where $\theta$ is a parameter and, assuming $Q(\theta) \succ 0$, $H(p)$ and $h(p)$ are defined as
\begin{align*}
H(\theta) & = \begin{bmatrix}
G(\theta) Q(\theta)^{-1} G(\theta)^\top & G(\theta) Q(\theta)^{-1} F(\theta)^\top\\
F(\theta) Q(\theta) ^{-1} G(\theta)^\top & F(\theta) Q(\theta) ^{-1} F(\theta)^\top
\end{bmatrix},~
h(\theta) = \begin{bmatrix}
G(\theta)Q(\theta)^{-1}q(\theta)+g(\theta)\\
F(\theta)Q(\theta)^{-1}q(\theta)+f(\theta)
\end{bmatrix}.
\end{align*}
MPC problems with linear (or affine) dynamics and polytopic constraints can always be expressed as in \eqref{eq:QP}.
For systems with nonlinear dynamics, we can obtain a linear prediction model by linearizing at the current state or along the trajectory obtained as solution of the previous MPC problem.
We discussed these ideas thoroughly in \cite[Section VI-A]{zuliani2023bp}.

Given a dual optimizer $z(\theta)$, that is, a solution of \eqref{eq:QP_dual}, one can retrieve the primal optimizer as
\begin{align*}
x(\theta) = \mathcal{G}(z,\theta) := -Q(\theta)^{-1} ( [F(\theta)^\top~G(\theta)^\top] z + q(\theta) ).
\end{align*}
To ensure path-differentiability, we need the following definition.
\begin{definition}\label{def:LICQ}
A minimizer of \eqref{eq:QP} satisfies the \emph{linear independence constraint qualification} (LICQ) if the rows of $G(\theta)$ associated to active constraints and the rows of $F(\theta)$ are linearly independent.
\end{definition}
\begin{theorem}[Theorem 1, \cite{zuliani2023bp}]\label{theorem:diff}
Suppose $x(\theta)$ is a minimizer of \eqref{eq:QP} satisfying LICQ and $Q(\theta) \succ 0$. Then the mapping $\theta \mapsto x(\theta)$ is locally unique, Lipschitz and definable.
Moreover, we have that
\begin{align*}
W+Q(\theta)^{-1}[G(\theta)^\top~F(\theta)^\top]Z \in \J_x(\theta),~~ W \in \J_{\mathcal{G},\theta}(z,\theta)
\end{align*}
where $z$ solves \eqref{eq:QP_dual}, and $Z=U^{-1}V$, with $U \in J_{P_C}(I-\gamma H(\theta))-I$, $V \in -\gamma J_{P_C}(Az+B)$, is an element of the conservative Jacobian of the dual problem \eqref{eq:QP_dual}, with $J_{P_C} = \operatorname*{diag}(\operatorname*{sign}(\lambda),\mathds{1}_{n_\text{eq}})$, and $A \in \J_H(\theta)$, $B \in \J_h(\theta)$.
\end{theorem}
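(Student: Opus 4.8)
The plan is to reduce the differentiation of the primal solution map to that of the dual, exploiting the primal--dual structure recorded above. Since $Q(\theta)\succ 0$, the primal QP \eqref{eq:QP} is strongly convex, so its KKT system is necessary and sufficient for optimality; eliminating $x$ through stationarity, $x=-Q(\theta)^{-1}\big(F(\theta)^\top\mu+G(\theta)^\top\lambda+q(\theta)\big)$, gives via Lagrangian (Wolfe) duality the box-constrained QP \eqref{eq:QP_dual} with the stated $H,h$, and the primal is recovered from the dual optimizer $z=(\lambda,\mu)$ by the affine map $x(\theta)=\mathcal{G}(z(\theta),\theta)$. Differentiating $x$ therefore amounts to (i) differentiating $\theta\mapsto z(\theta)$ and (ii) applying the conservative-Jacobian chain rule to $\mathcal{G}(z(\theta),\theta)$.

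For the qualitative claims I would argue that the map from the QP data $(Q,q,F,f,G,g)$ to the optimal pair $(x,z)$ is semialgebraic: strong convexity makes $x$ unique, LICQ makes $(\lambda,\mu)$ unique, and the solution is cut out of the (polynomial) KKT system, hence semialgebraic by Tarski--Seidenberg; composing with the definable data maps $\theta\mapsto Q(\theta),\dots,g(\theta)$ and invoking stability of definability under composition \cite[Exercise 1.11]{coste1999introduction} shows that $x(\theta)$ and $z(\theta)$ are definable, with local uniqueness immediate from strong convexity. Local Lipschitzness follows from Robinson strong regularity: the strong second-order sufficient condition holds trivially here because the Hessian of the Lagrangian is $Q(\theta)\succ 0$, and together with LICQ this makes the solution map locally single-valued and Lipschitz in the data, hence (composing with locally Lipschitz data) in $\theta$.

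For the conservative Jacobian, I would represent $z(\theta)$ as the fixed point of the proximal-gradient map of \eqref{eq:QP_dual}, $z=P_C\big((I-\gamma H(\theta))z-\gamma h(\theta)\big)=:\Phi(z,\theta)$, with $C=\R_{\ge 0}^{n_{\text{ineq}}}\times\R^{n_{\text{eq}}}$, $\gamma>0$ a fixed step, and $P_C$ equal to componentwise $\max(\cdot,0)$ on the $\lambda$-block and identity on the $\mu$-block, so that $J_{P_C}=\operatorname*{diag}(\operatorname*{sign}(\lambda),\mathds{1}_{n_{\text{eq}}})$ is an admissible conservative-Jacobian selection of $P_C$ at $z$. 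The nonsmooth implicit function theorem for conservative Jacobians \cite{bolte2021nonsmooth} applied to $z-\Phi(z,\theta)=0$ gives, whenever $I-J_{\Phi,z}$ is invertible, $\J_z(\theta)\ni -(I-J_{\Phi,z})^{-1}J_{\Phi,\theta}$, with $J_{\Phi,z}=J_{P_C}(I-\gamma H(\theta))$ and $J_{\Phi,\theta}=-\gamma J_{P_C}(Az+B)$, $A\in\J_H(\theta)$, $B\in\J_h(\theta)$ (where $Az$ abbreviates $\partial_\theta(H(\theta)z)$ at frozen $z$); writing $U\in J_{P_C}(I-\gamma H(\theta))-I$ and $V\in-\gamma J_{P_C}(Az+B)$, this reads $\J_z(\theta)\ni-U^{-1}V=-Z$. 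Feeding $-Z$ into the chain rule for $x(\theta)=\mathcal{G}(z(\theta),\theta)$, and using that $\mathcal{G}$ is affine in $z$ with $z$-slope $-Q(\theta)^{-1}[F(\theta)^\top~G(\theta)^\top]$, yields $\J_x(\theta)\ni W+Q(\theta)^{-1}[G(\theta)^\top~F(\theta)^\top]Z$ for $W\in\J_{\mathcal{G},\theta}(z,\theta)$, which is the claimed identity up to the ordering convention for the two multiplier blocks.

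The main obstacle is verifying the hypothesis of the nonsmooth implicit function theorem, i.e. invertibility of $U=J_{P_C}(I-\gamma H(\theta))-I$ for a suitable $\gamma$. This is where LICQ does quantitative work: if $\bar G$ collects the rows of $G(\theta)$ active at the solution, LICQ says the stacked matrix $[\bar G;F]$ has full row rank, so the principal submatrix of $H(\theta)$ indexed by the active $\lambda$'s and all $\mu$'s equals $[\bar G;F]\,Q(\theta)^{-1}[\bar G;F]^\top\succ 0$; combining this with the block structure of $J_{P_C}$ (identity on active coordinates, zero on inactive ones) and a standard estimate for the proximal-gradient operator shows $I-J_{P_C}(I-\gamma H(\theta))$ is nonsingular once $\gamma$ is small relative to $\|H(\theta)\|$ and the least eigenvalue of that submatrix. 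A secondary subtlety is the consistent choice of the selection $J_{P_C}$ at degenerate active inequalities ($\lambda_i=0$ with the pre-projection coordinate also $0$): this is precisely the situation that forces conservative rather than classical Jacobians and is covered by the calculus of \cite{bolte2021conservative,bolte2021nonsmooth}; away from degeneracy $J_{P_C}$ is the ordinary derivative of $P_C$ and the argument collapses to the classical implicit function theorem.
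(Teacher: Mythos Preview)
The paper does not actually prove Theorem~\ref{theorem:diff}: it is quoted verbatim from \cite{zuliani2023bp} and stated in the appendix without any argument, so there is no in-paper proof to compare your attempt against. That said, your reconstruction is the natural one and almost certainly coincides with the proof in the cited source: reduce to the dual via strong convexity, characterise the dual optimizer as the fixed point of the projected-gradient map $z=P_C\big((I-\gamma H)z-\gamma h\big)$, apply the nonsmooth implicit function theorem of \cite{bolte2021nonsmooth} with the conservative selection $J_{P_C}=\operatorname*{diag}(\operatorname*{sign}(\lambda),\mathds{1}_{n_{\text{eq}}})$, and then push the result through the affine recovery $\mathcal{G}$. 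Your sign bookkeeping is correct (the two minus signs from $U$ and from $\J_{\mathcal{G},z}$ cancel to give the stated $+Q(\theta)^{-1}[G^\top~F^\top]Z$), and your identification of the invertibility of $U$ as the place where LICQ is used quantitatively is on point.

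Two minor remarks. First, your definability argument assumes the data maps $\theta\mapsto Q(\theta),\ldots,g(\theta)$ are themselves definable; this is implicit in the paper's setup (Assumption~\ref{assumption:algorithm:definability_dynamics_and_cost} covers it at the MPC level) but is not stated in the theorem itself, so you are tacitly importing an extra hypothesis. Second, the block ordering in $\mathcal{G}$ versus the final formula is indeed inconsistent in the paper's own typesetting ($[F^\top~G^\top]$ in $\mathcal{G}$ but $[G^\top~F^\top]$ in the theorem), so your caveat about the ordering convention is warranted rather than a defect of your argument.
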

Fulfilling the assumptions of Theorem~\ref{theorem:diff} is not difficult: strong convexity depends on the parameterization of the cost, and it is therefore possible to satisfy this condition by design. LICQ holds in many situations, for example if the constraints are box constraints on the state and input.

\section{Ablation Studies}
\label{section:ablation}
We repeated the experiments in Section~\ref{section:simulation} with different values of $\delta$ and over multiple random seeds (that determine the value of the random perturbations $v_k$ needed to form the zeroth-order estimation). The results are reported in Table~\ref{tab:ablation:delta} (where we run for $1000$ iterations and modified the stepsize to $\rho=\texttt{1e-4}$), and in Table~\ref{tab:ablation:seed}. All costs denote the best attained cost in the last $50$ iterations.
These randomized tests highlight the robustness of the approach.

\begin{table}[ht]
\centering
\caption{Ablation study on $\delta$.}
\small
\label{tab:ablation:delta}
\begin{tabular}{l c c c c c c c}
\toprule
\textbf{$\delta$} & \texttt{1e-04} & \texttt{5e-05} & \texttt{1e-05} & \texttt{5e-06} & \texttt{1e-06} & \texttt{5e-07} & \texttt{1e-07} \\
\midrule
\textbf{Cost} & $1092.9$ & $1093.1$ & $1098.1$ & $1088.9$ & $1097.1$ & $1092.4$ & $1093.8$ \\
\bottomrule
\end{tabular}
\end{table}

\begin{table}[ht]
\centering
\caption{Ablation study on the random seed.}
\small
\label{tab:ablation:seed}
\begin{tabular}{l c c c c c c c c c c}
\toprule
\textbf{Seed} & $0$ & $1$ & $2$ & $3$ & $4$ & $5$ & $6$ & $7$ & $8$ & $9$ \\
\midrule
\textbf{Cost} & $1094.1$ & $1094.1$ & $1092.2$ & $1090.8$ & $1092.8$ & $1090.7$ & $1093.4$ & $1092.5$ & $1104.9$ & $1097.3$ \\
\bottomrule
\end{tabular}
\end{table}

We report the following computation times (in seconds) for a laptop running Windows 11 (32GB RAM, i7-1165G7 2.80GHz): 
the average time to solve each QP was $0.0030$ (variance: $2.8227 \times 10^{-7}$), 
the time to compute the Jacobian was $0.0061$ (variance: $4.4733 \times 10^{-7}$), 
and the total time for each closed-loop iteration was $1.6032$ (variance: $0.0121$).

\end{document}